\documentclass[10pt,twocolumn,twoside]{IEEEtran}
\IEEEoverridecommandlockouts

\usepackage{cite}
\usepackage{amsmath,amssymb,amsfonts,mathtools}
\usepackage{amsthm}
\usepackage{algorithmic}
\usepackage{graphicx}
\usepackage{textcomp}
\usepackage{xcolor}
\usepackage[linesnumbered,ruled,vlined]{algorithm2e}
\usepackage[shortlabels]{enumitem}
\usepackage{bm}
\usepackage{stfloats}
\usepackage{dsfont}

\def\BibTeX{{\rm B\kern-.05em{\sc i\kern-.025em b}\kern-.08em
    T\kern-.1667em\lower.7ex\hbox{E}\kern-.125emX}}
\newtheorem{theorem} {Theorem}
\newtheorem{example} {Example}
\newtheorem{definition} {Definition}
\newtheorem{assumption} {Assumption}
\newtheorem{lemma} {Lemma}
\newtheorem{coro} {Corollary}
\newtheorem{proposition}{Proposition}
 \newtheorem*{remark}{Remark}

\newcounter{subeqn} %

\usepackage{tikz}
\usepackage{pgfplots}
\pgfplotsset{every axis/.append style={line width=1pt}}
\pgfplotsset{every tick label/.append style={font=\tiny}}
\pgfplotsset{compat=newest}
\usetikzlibrary{shapes, fit, decorations.markings, matrix, plotmarks, positioning, spy, patterns, shadows, calc, backgrounds, arrows.meta, arrows}
\usetikzlibrary{decorations.pathreplacing}
\usetikzlibrary{circuits.ee.IEC}
\usetikzlibrary{fit}
\usetikzlibrary{plotmarks,pgfplots.colorbrewer}
\usetikzlibrary{decorations.pathreplacing}
\usetikzlibrary{circuits.ee.IEC}
\usepackage[americanvoltages,siunitx]{circuitikz}

\newcommand{\ts}[1]{{\textnormal{#1}}}

\newcommand{\ie}{\emph{i.e.},\ }

\newcommand{\Nset}{\mathbb{N}}
\newcommand{\Rset}{\mathbb{R}}

\newcommand{\mc}{\mathcal}
\newcommand{\md}{\mathds}
\newcommand{\mb}{\mathbf}
\newcommand{\mbb}{\mathbb}
\newcommand{\mf}{\mathfrak}

\DeclareFontFamily{U}{mathx}{\hyphenchar\font45}
\DeclareFontShape{U}{mathx}{m}{n}{
      <5> <6> <7> <8> <9> <10>
      <10.95> <12> <14.4> <17.28> <20.74> <24.88>
      mathx10
      }{}
\DeclareSymbolFont{mathx}{U}{mathx}{m}{n}
\DeclareFontSubstitution{U}{mathx}{m}{n}
\DeclareMathSymbol{\bigtimes}{1}{mathx}{"91}

\usepackage[nolist]{acronym}
\begin{acronym}
  \acro{uG}[$\mu$G]{Micro-Grid}
  \acro{DSM}{Demand-Side Management}
  \acro{P2P}{Peer-to-Peer}
\end{acronym}

\allowdisplaybreaks
\begin{document}

\title{A Coalitional Game for Demand-Side Management in a Micro-Grid with Multiple Electricity Retailers
\thanks{P. R. Baldivieso-Monasterios is with the School of Electric and Electronic  Engineering, The University of Sheffield, Mappin Street, S1 3JD Sheffield, United Kingdom. {\tt\small p.baldivieso@sheffield.ac.uk}}
\thanks{G Konstantopoulos is with the Department of Electrical and Computer Engineering, University of Patras, Rion, 26500, Greece. 
{\tt\small g.konstantopoulos@ece.upatras.gr}}%
\thanks{D. Bauso is with the Dipartimento di Ingegneria, Universit\`a di Palermo, viale delle Scienze, Palermo, Italy {\tt\small dario.bauso@unipa.it}}%
}

\author{Pablo R. Baldivieso-Monasterios*, Fernando Genis Mendoza, George Konstantopoulos, and Dario Bauso}

\maketitle

\begin{abstract}
  This paper develops a demand-side management framework for electricity networks with multiple competing retailers. The interaction among retailers is formulated as a coalitional game, yielding a family of coupled mixed-integer optimisation problems in which retail prices, consumer power demands, and the network partition are jointly optimised. To solve this problem, we propose a coalition-formation algorithm based on multi-objective optimisation principles. The algorithm seeks to identify coalition structures that balance retailer profit and consumer welfare. We prove that the proposed algorithm converges in a finite number of steps and recovers a subset of weakly Pareto-efficient solutions of the coupled optimisation problems. The framework is further extended to a risk-sharing formulation, in which the objective is defined using conditional value-at-risk. Numerical simulations on an academic example demonstrate the method's behaviour and show that the resulting equilibrium partition set contains several admissible trade-offs between the competing objectives. The results provide a tractable approach for analysing competition, coalition formation, and risk-aware pricing in multi-retailer demand-side management systems.
\end{abstract}

\begin{IEEEkeywords}
Micro-grid, coalitional games, online pricing, resistive network, stability. 
\end{IEEEkeywords}

\section{Introduction}
The problem of \ac{DSM} has become more relevant in recent years, driven by growing calls for greater flexibility in power demand at the distribution level, driven by large numbers of flexible loads, distributed generation units, and local storage devices. Consumers can influence how a power network operates by optimising their demand in response to price incentives and contractual arrangements. Similarly, energy retailers seek to maximise their gains from their energy production whilst attracting the largest number of users. In \ac{DSM}, this retailer-buyer interaction and coordination enable the network to respond to and adjust its performance in the face of unforeseen events.  

The landscape of dynamic pricing algorithms in the literature is vast. In this brief literature review, we describe pricing mechanisms in game-theoretic terms for \ac{DSM}. The consumer-retailer interaction in \ac{DSM} can be modelled using game-theoretic methods. The classic approaches include the seminal work of  \cite{Mohsenian-Rad2010}, in which network participants employ their best-response maps to prices. The outcome is a Nash equilibrium, and the resulting game links individual cost minimisation with aggregate load shaping. The authors in \cite{Samadi2012} complemented \cite{Mohsenian-Rad2010} by introducing notions of social welfare in \ac{DSM}. One caveat mentioned in the above papers but not addressed is the hierarchical nature of \ac{DSM}, in which an energy provider sets prices and consumers respond optimally. Stackelberg formulations, including \cite{Maharjan2013b}, exploit this hierarchical structure. The ideas of \ac{DSM} have been incorporated in a control framework in \cite{Stephens2015} to coordinate distributed assets. These classical approaches, and their references therein, establish prices as the natural coordination signal in a \ac{DSM}; however, they implicitly assume that the structure between retailers and consumers is fixed. 

Work on \ac{DSM} has shifted towards the \ac{uG} case, \ie smaller, more localised networks dominated by uncertain renewable energy sources. The price-based coordination strategy in \cite{Quijano2023} uses prices to coordinate electric springs for \ac{DSM} in \ac{uG}s, showing how device-level flexibility can be activated through an economic signal. As the ideas for \ac{DSM} mature, research, such as \cite{Yuan2023a} and \cite{Qiu2024}, focuses on mitigating uncertainties introduced by renewable generation and demand. These schemes deliver robust, risk-aware pricing policies for managing energy flows. These contributions complement classical \ac{DSM} by introducing uncertainty and operational risk into price-based energy management. However, the market structure remains largely fixed: prices and schedules are optimised, but the assignment of buyers to competing retailers is not itself a decision variable.

Within this \ac{uG} setting, \ac{DSM} has met \ac{P2P} methods for handling and coordinating prices and load consumption. In the \ac{P2P} framework, see \cite{Tushar2023}, the concept of prosumer, a node that acts as both a generator and a load, takes centre stage. Under \ac{P2P}, prosumers exchange resources subject to network constraints; the formulation and modelling are similar to those in \ac{DSM}: a game-theoretic framework. For example, \cite{Belgioioso2022} proposes energy trading as a game and establishes convergence guarantees to an efficient, operationally feasible equilibrium. The framework in \cite{Tarashandeh2024} also considers \ac{P2P} trading under distribution-network constraints while preserving the independent nature of agents. More recent prosumer-centric formulations, such as \cite{Hoque2024}, incorporate network-secure export--import limits into \ac{P2P} trading, whereas \cite{Xia2023} designs a grid-friendly pricing mechanism to support the diffusion of P2P energy sharing in communities. Similarly, \cite{BaldiviesoMonasterios2022} and \cite{Toderean2023} show how to coordinate pricing policies in the presence of exogenous signals such as forecasts. Despite the success of \ac{P2P}, its focus is mainly on trade clearing, operating envelopes, or bilateral exchange among prosumers, rather than on multi-retailer \ac{DSM} with coalition-dependent price and demand decisions.

Coalitional approaches provide another perspective on local energy coordination, as they enable the formation of groups of agents that jointly optimise their objectives. In \cite{Raja2023}, bilateral P2P trading is formulated as an assignment and coalitional game, yielding stable and fair contracts between buyers and sellers via distributed negotiation. The coalition-formation model of \cite{Zhang2024} studies how \ac{P2P} trading coalitions emerge in transactive energy communities and analyses the incentives for prosumer participation. Cooperative-game approaches to energy communities, such as \cite{Bossu2024}, focus on the generation and distribution of gains produced by collective self-consumption. In \cite{GenisMendoza2021}, the authors use a Stackelberg method within a coalitional framework. In \cite{Xu2014} and \cite{Ghiasvand2022}, we can see, from a numerical perspective, the relationship between coalition formation and changes in energy prices. A study of the case in which greedy prosumers do not align with the \ac{uG}'s decision is presented in \cite{Han2019}; here, a balanced game is proposed without the need to compute the imputation set. A notion of fairness is introduced by \cite{Moafi2023} using particle swarm optimisation and using the nucleolus as a game solution concept. A bidding system for cooperating prosumers is presented in \cite{Chakraborty2019}; constraints on power capacity and losses are ignored. The use of evolutionary game theory in conjunction with coalitions is proposed in \cite{Mondal2017}, where the price is a quadratic function of the consumption. These works complement price-based \ac{DSM} by explaining how cooperation can form and how benefits can be allocated. Nevertheless, they do not consider the retailer-indexed partition of buyers as the central object \ac{DSM} optimisation.

We have, however, identified the absence of a framework that jointly models price management, buyer--retailer coalition formation, and multi-objective \ac{DSM}. Recent work on \ac{DSM} has developed price-based coordination, risk-aware aggregation, and robust \ac{uG} energy management, while the coalitional energy-trading literature has focused mainly on \ac{P2P} contracts, prosumer coalitions, and fair value allocation. Existing methods do not treat the assignment of buyers to competing retailers as a decision variable, alongside continuous prices and demand. The partitioning of network conditions determines how prices are determined within each group and how the resulting coalition structure affects the objectives of all participants.

This paper tackles this gap. We treat the coalition structure itself as a decision variable in a multi-retailer \ac{DSM} problem. We pose a mixed-integer multi-objective optimisation problem that captures the notions of competition and collaboration, and solve it algorithmically. The resulting algorithm generates an equilibrium set of terminal partitions, and the scalarised objective values associated with this set provide weakly Pareto-optimal support for the competing buyer and retailer objectives. Our approach aims to provide a rigorous framework for modern energy trading platforms, such as those in \cite{GoodEnergyLTD2020} and \cite{UswitchLimited}. The contributions and structure of this paper are as follows.

\begin{itemize}
\item We propose a mixed discrete-continuous \ac{DSM} problem in a coalitional game framework in Section~\ref{sec:analys-solut-ocp}, where multiple retailers compete in a \ac{uG} to attract consumers. We introduce a novel method that links a cost-saving game directly to the network conductance matrix.  
\item We propose an algorithm for coalition formation in Section~\ref{p4sec:game} that exhibits finite convergence, thereby maximising profits and welfare for both retailers and consumers. The final partition is not only optimal in a Pareto sense but also stable in the game-theoretic sense.
\item In Section~\ref{p4sec:game}, we characterise the resulting equilibrium partition set and relate its objective vectors to supported weak Pareto optima of the multi-objective \ac{DSM} problem
\item In Section~\ref{p4sec:stats}, we extend the formulation to stochastic demand by introducing coalition-level overload risk and a CVaR-type risk functional, thereby linking coalition formation with tail-risk reduction in retailer-specific demand portfolios.
\end{itemize}
\textbf{\emph{Notation}}: For any finite set $\mc{B}$, $|\mc{B}|$ denotes its cardinality. A \emph{game} is a triplet $(\mc{N},\{\mc{H}_i\}_{i\in\mc{N}},\{u_i\}_{i\in\mc{N}})$ composed by a set of players $\mc{N}$, a collection action sets $\mc{H}_i$, and a collection of payoff functions $u_i\colon\mc{H}_i\to\Rset$. An \emph{action profile} for a set of $|\mc{N}|$ players is an ordered tuple $(h_1,\ldots,h_{|\mc{N}|})$. The \emph{best response set} of player $i\in\mc{N}$ is defined as $\mc{Q}_i(h_{-i}):=\arg\max_{h_i \in \mathcal{H}_i}u_i(h_i,h_{-i})$ where $h_{-i} = (h_j)_{j\in\mc{N}\setminus\{i\}}$. An  \emph{action profile} $(h_1^*,\ldots,h_{|\mc{N}|}^*)$ is a \emph{Nash Equilibrium} if $h_i^*\in\mc{Q}_i(h_{-i}^*)$ for all $i\in\mc{N}$. A coalitional game is defined by the pair $(\mathcal{N},\nu)$ with $\mc{N}$ a set of players and $\nu\colon 2^{\mathcal{N}}\to\Rset$ its characteristic function. A payoff vector is an element $a\in\Rset^{|\mc{N}|}$ such that $a_i\in\Rset$ denotes the gain of player $i\in\mc{N}$. The \emph{core}, $\mc{C}\subset\Rset^{|\mc{N}|}$, of a coalitional game $(\mc{N},\nu)$ satisfies $\mc{C} = \{a\in\Rset^{|\mc{N}|}\colon\forall S\subset\mc{N},~\sum_{i\in S}a_i\geq\nu(S)\}$ \cite{Bausoa}. A cooperative game $(\mc{N},\nu)$ with $\nu(\emptyset)= 0$ is: a \emph{convex game} if $\nu(S\cup T)+\nu(S\cap T)\leq \nu(S)+\nu(T)$ $\forall S,T\subseteq \mc{N}$; a \emph{permutationally convex} (PC)  game if there exists a permutation $\pi$ of $\mc N$ such that $\forall S\subseteq \mc{N}\setminus [k]$ and $k\geq j$, $\nu([j]\cup S)-\nu([j])\leq \nu([k]\cup S)-\nu([k])$ where $[k] = \{k\}\cup\{i\in\mc{N}\colon \pi(i)<\pi(k)\}$. A power network is defined by a connected, undirected, and weighted graph $\mc{G} = \mc{(N,E)}$, where $\mc{N}$ is the set of nodes, and $\mc{E}\subseteq\mc{N}\times\mc{N}$ the set of edges defining the interconnection topology. A $k-$path from node $i$ to $j$ is a sequence of $k$ edges $\phi(i,j) = \{e_1,\ldots,e_k\}\subset\mc{E}$. The mapping $\omega\colon\mc{E}\to\Rset$ defines the weight of each edge such that  $\omega(i,j) = \omega_{ij}\in\Rset$. For undirected graphs, the mapping $\omega(\cdot,\cdot)$ is symmetric and can be characterised by a symmetric adjacency matrix $A\in\Rset^{|\mc{N}|\times|\mc{N}|}$. The matrix $B\in\Rset^{|\mc{N}|\times|\mc{N}|}$, such that $B_{ij} = 1$ if $A_{ij}>0$ and $B_{ij} = 0$ otherwise, and $A$ determine the connectivity properties of $\mc{G}$. From \cite{bullo}, $B^k$ and $A^k$, with $k\in\Nset$, denote the number of $k$-paths and the sum of the products of the weights of all $k$-paths, respectively, from node $i$ to node $j$. The total weight of $\mathcal{G}$ is $\omega(\mathcal{E})=\sum_{(i,j)\in\mathcal{E}} \omega(i,j) = \frac{1}{2}\sum_i\sum_jA_{ij}$. The out degree and Laplacian matrices of $\mc{G}$ are $D\coloneqq \ts{diag}(\sum_{j=1}^N A_{ij})$, and $\mc{L} = D-A$ respectively. The set of neighbours of node $i$ is $\mc{N}_i = \{j\in\mc{N}\colon\exists (i,j)\in\mc{E}\}$; the out-degree of node $i$ is $\delta_i = |\mc{N}_i|$. A tree is an undirected graph in which a sequence of edges connects any two vertices. The \emph{Minimum Spanning Tree} (MST) of a graph $\mathcal{G(N,E)}$ is a tree $T=(\mathcal{N, E^*})$ such that ${\mc{E}^*} = {\arg\min}\{\omega(\mc{E}')\colon \mc{E}'\subset\mc{E},\forall i\in\mc{N},\exists j\in\mc{N}, (i,j)\in\mc{E}'\}$.
\section{System Model, Definition and Preliminaries} \label{p4sec:models}
In this section, we make precise the notions of retailers and consumers. In our demand-side management problem, electricity prices regulate user consumption, with retailers and a subset of buyers jointly optimising price and consumption. 
\subsection{Coalitional setting} \label{p4sec:setscoalitions}
The set of players is $\mc{N} = \{1,\ldots,N\}$. The set $\mc{N}$ is partitioned into two non-overlapping sets: retailers $\mathcal{R}\subset\mathcal{N}$ and consumers $\mathcal{B}\subset\mathcal{N}$. Both $\mc{R}$ and $B$ satisfy  $\mc{R}\cup\mc{B}=\mc{N}$, $\mc{R}\cap\mc{B}=\emptyset$, and $N = |\mc{B}| + |\mc{R}|$. Each retailer $r\in\mc{R}$ seeks to attract a subset of consumers $\mc{B}_r\subset\mc{B}$; the \emph{retailer coalition} for $r\in\mc{R}$ is $S_r = \{r\}\cup\mc{B}_r\subset\mc{N}$. We invoke the following regularity assumption:
\begin{assumption}
  The family of sets $\{S_r\}_{r\in\mc{R}}$ satisfy the following assertions.
  \begin{enumerate}
  \item\label{p4ass:union} $\bigcup_r S_r=\mathcal{N}$,
  \item\label{p4eq:nooverlap} For any $r,s\in\mc{R}$, $S_r\cap S_s = \emptyset$,    
  \item\label{p4eq:oneseller} For any $r\in\mc{R}$, $S_r\cap \mc{R}\setminus\{r\} = \emptyset$.
  \end{enumerate}
  \label{assump:retailer_coalition}
\end{assumption}
Assumption \ref{assump:retailer_coalition} ensures the family $\{S_r\}_{r\in\mc{R}}$ is a non-overlapping covering of $\mc{N}$ and each consumer $b\in\mc{B}$ is assigned to only one $S_r$. 
\subsection{Consumer and Retailer Profit Functions}
\label{p4sec:players}
In our problem setting, consumers and retailers are considered to be price-taking rational agents, \ie both aim to maximise their profits for producing or consuming energy. The profit function for retailer $r\in\mc{R}$ is $ \Pi _{r} (\lambda_r,P)= {\lambda_r} P - C_r(\lambda,P+P_r^{loss}),$  where $\lambda_r$ is the power price and $C_r(\cdot)$ is a function corresponding to the cost of producing $P+P_r^{loss}$ units of power and loss $P_r^{loss}$. Similarly, every consumer $b$ that has opted to consume from~$r$ calculates its profit in accordance to $\Pi _{b}(P,\lambda_r)= U_b(P,r) - {\lambda_r}P$ where $U_b(\cdot,\cdot)$ is the utility from consuming $P$ power at a price $\lambda_r$. Following~\cite{Namerikawa}, we require:
\begin{assumption}
   For each $r\in\mc{R}$ and $b\in\mc{B}$, $C_r(\cdot)\colon\Rset\to\Rset$ and $U_b(\cdot)\colon\Rset\to\Rset$ are continuous and  monotonically increasing. In addition, $C_r(\cdot)$ is convex and $\forall r\in\mc{R}$, $U_b(\cdot,r)$ is concave.
   \label{assum:costs}
 \end{assumption}
 Each retailer in our setting seeks to maximise profits by selling power to a subset of buyers $\mc{B}_r^*$ at a price $\lambda_r^*$. Similarly, each consumer seeks to maximise its utility by consuming ${P_{{b}}}^*$ power from a selected $r\in\mc{R}$. This process is captured in the following coupled optimisation problems: 
 \begin{subequations}
   \begin{align}
     \mbb{O}_r(B,P_{B}) &\colon \max\{\Pi _{r} (\lambda,\sum_{b\in {C}} P_{{b}})\colon  \lambda  \in [\underline \lambda  ,\bar \lambda ],~{C}\subset B,\nonumber \\
     &\quad\quad\quad\quad \biggr|\sum_{b\in C} P_{{b}}+P_r^{loss}\biggl|\leq P_r^\ts{max}  \},\label{p4eq:optprice_mi}\\
     \mbb{O}_b(\lambda_\mc{R}) &\colon \max\{\Pi_b(P,\lambda_r)\colon P  \in [0 ,\bar \zeta ],~r\in\mc{R}\}  ,\label{p4eq:optcons_mi}
   \end{align}\label{p4eq:opt_mi}
 \end{subequations}
where $B\subseteq\mc{B}$ is the set of buyers willing to purchase energy from $r$ and $P_{B} = \{P_b\}_{b\in B}$ is the collection their power demands. For equation~\eqref{p4eq:optcons_mi}, the parameters are the prices announced by each retailer, \ie $\lambda_\mc{R} = \{\lambda_r\}_{r\in\mc{R}}$.  The optimisation problems defined in \eqref{p4eq:opt_mi} are mixed-integer, making their use in an online setting problematic. This problem, however, is of interest conceptually because its solution yields both the optimal price for each retailer $\lambda_r^*$, the optimal consumption for each consumer ${P_b}^*$, and the optimal partition $\{\mc{B}_r^*\cup\{r\}\}_{r\in\mc R}$ of $\mc N$.

\section{Analysis of the solution of the Multiple Retailer optimisation}
\label{sec:analys-solut-ocp}

The domain of the multi-objective optimisation problem comprises a vector of prices for all retailers $\lambda_{\mc{R}} = (\lambda_{1},\ldots,\lambda_{|\mc{R}|})\in\Rset^{|\mc{R}|}$, a vector of power demands for consumers $P_\mc{B} = (P_{1},\ldots,P_{|\mc{B}|})\in\Rset^{|\mc{B}|}$, and a partition of the set $\mc{N}$ into coalitions $\mc{C} = \{S_r\}_{r\in\mc{R}}$ satisfying Assumption~\ref{assump:retailer_coalition}. For both prices and powers, the set containing all feasible combinations is $\mc{X}(\mc{C}) \subset \Rset^{|\mc{R}| + |\mc{B}|}$ determined by the constraint on continuous variables of Problems~\eqref{p4eq:opt_mi}. On the other hand, the set of partitions of $\mc{N}$ is denoted by $\mc{M}_\mc{N}$. The set of  partitions satisfying Assumption~\ref{assump:retailer_coalition} is denoted $\Pi_{\mc R,B}$. Therefore, the joint set of decision variables is $(x,\mc{C}) \in \Gamma\times\mc{M}_{\mc R,B}$.

The cardinality of $\mc{M}_\mc{N}$ scales with the Bell number of $|\mc{N}|$ and undergoes a combinatorial explosion as $|\mc{N}|$ increases. In our setting, however, we do not need all the possible partitions. The next result yields the exact number of partitions enforced by Assumption~\ref{assump:retailer_coalition}.
\begin{proposition}[Viable coalitions]
Suppose Assumption~\ref{assump:retailer_coalition} holds. The number of viable partitions is $|\mc{M}_{\mc R,B}| = |\mc R|^{|\mc B|}$.
\label{prop:counting}
\end{proposition}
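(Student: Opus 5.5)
We will prove the count by exhibiting a bijection between the partitions in $\mc{M}_{\mc R,B}$ and the functions $f\colon\mc{B}\to\mc{R}$, the assignments of each consumer to a retailer. The set of such functions has cardinality $|\mc R|^{|\mc B|}$, since each of the $|\mc B|$ consumers independently picks one of $|\mc R|$ retailers. We read $\mc{M}_{\mc R,B}$ as the families $\{S_r\}_{r\in\mc R}$ indexed by the retailers. Each family has the form $S_r=\{r\}\cup\mc{B}_r$ with $\mc{B}_r\subseteq\mc{B}$ and satisfies Assumption~\ref{assump:retailer_coalition}. Some $\mc B_r$ may be empty, so a retailer may end up alone in its coalition.

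\textbf{Forward map.} Given such a family, define $f(b)=r$ whenever $b\in S_r$. Item~\ref{p4ass:union} of Assumption~\ref{assump:retailer_coalition} ensures every $b\in\mc B\subset\mc N$ lies in some $S_r$. Item~\ref{p4eq:nooverlap} ensures this $r$ is unique, so $f$ is well defined.

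\textbf{Inverse map.} Given a function $f$, set $S_r=\{r\}\cup f^{-1}(r)$ for each $r\in\mc R$. We then check the three items of the assumption.
\begin{enumerate}
\item The union covers $\mc N$: it contains all of $\mc R$ and all of $f^{-1}(\mc R)=\mc B$.
\item The sets are pairwise disjoint, because fibres of $f$ are disjoint and $\mc R\cap\mc B=\emptyset$.
\item Each $S_r$ contains no retailer other than $r$, since $f^{-1}(r)\subseteq\mc B$.
\end{enumerate}

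\textbf{The maps are mutually inverse.} Composing in either order returns the original object. Starting from a family, $S_r=\{r\}\cup\mc B_r$ and $\mc B_r=f^{-1}(r)$ by construction. Starting from a function, applying both maps clearly recovers $f$.

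The main obstacle is the interpretation of ``partition'' rather than any computation. If partitions are regarded as unlabelled set partitions, a singleton $\{r\}$ might be thought to make some configurations coincide. This does not happen here: each block contains exactly one retailer, by items~\ref{p4eq:oneseller} and~\ref{p4ass:union}. The labelling by retailers is therefore canonical, and unlabelled partitions correspond one-to-one to labelled families. We will state this explicitly so that the bijection is with unlabelled partitions of $\mc N$ as well.
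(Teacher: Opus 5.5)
Your proof is correct, but it takes a different route from the paper's. The paper does not construct a bijection. Instead it stratifies the viable partitions by the number $k$ of retailers that receive at least one buyer. It chooses a set partition of $\mc{B}$ into $k$ nonempty blocks, counted by the Stirling number $\genfrac\{\}{0pt}{}{|\mc{B}|}{k}$. It then matches these blocks injectively with $k$ of the retailers in $\binom{|\mc{R}|}{k}k!$ ways, leaves the other retailers as singletons, and sums over $k$. The resulting expression is collapsed to $|\mc{R}|^{|\mc{B}|}$ by expanding the Stirling numbers, applying $\binom{a}{b}\binom{b}{c}=\binom{a}{c}\binom{a-c}{b-c}$, and evaluating a binomial expansion at $x=-1$.

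In effect the paper proves the classical identity $m^n=\sum_{k}\genfrac\{\}{0pt}{}{n}{k}\,m(m-1)\cdots(m-k+1)$ by algebra. Your map $f\mapsto\{\{r\}\cup f^{-1}(r)\}_{r\in\mc{R}}$ is exactly the double-counting argument behind that identity, used directly.

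Your route has several advantages:
\begin{itemize}
\item It is shorter and avoids the alternating-sum manipulation.
\item It treats degenerate cases such as $\mc{B}=\emptyset$ uniformly; there the paper's sum starting at $k=1$ gives $0$ rather than $1$.
\item It makes explicit the labelled-versus-unlabelled issue, which the paper leaves implicit.
\item It rigorously identifies $\mc{M}_{\mc R,B}$ with the set of maps $\psi\colon\mc{B}\to\mc{R}$. The paper only asserts this informally after Proposition~\ref{prop:anti_chain}, yet relies on it in the proof of Theorem~\ref{thm:alg_convergence}, where each buyer's action is a choice of retailer.
\end{itemize}

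What the paper's computation offers in return is a finer count: the number of viable partitions in which exactly $k$ retailers serve at least one buyer is $\genfrac\{\}{0pt}{}{|\mc{B}|}{k}\binom{|\mc{R}|}{k}k!$. This also follows from your bijection by counting maps whose image has size $k$.
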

\begin{proof}
  The set $\mc{N}$ admits a partition $\mc{N} = \mc{R}\cup\mc{B}$. The process of assigning a subset of buyers to a retailer under Assumption~\ref{assump:retailer_coalition} entails imposing constraints on $\Pi_\mc{B}$. For each element in $\mc{C}\in\Pi_{\mc{R},\mc{B}}$, the retailer assignation correspond to a surjective map $\mc{R}\to\mc{C}$. Furthermore, if $|\mc{C}| > |\mc{R}|$, a surjective map does not exist between $\mc{R}$ and $\mc{C}$. We note that, following \cite{Stanley2011}, the number of partitions of $\mc B$ with $k>0$ elements is given by the sterling numbers of the second kind, \ie \(\genfrac\{\}{0pt}{}{|\mc{B}|}{k} = \frac{1}{k!}\sum_{h = 0}^{k}(-1)^{k-h}\binom{k}{h}h^{|\mc{B}|}\). Fixing a partition $\mc{C}\in\Pi_\mc{B}$ of $k$ elements. We can select then $k$ elements from $\mc{R}$. The number of bijections between these two sets $\{r_1,\ldots,r_k\}\to\mc{C}$ is given by $k!$. Therefore, the number of possible assignments of $k$ retailers to a partition of $k$ elements is $\genfrac\{\}{0pt}{}{|\mc{B}|}{k}\binom{|\mc{R}|}{k}k!$. The number of possible partition pairings  $|\mc{M}_{\mc R,B}|$ is \[
    \begin{split}
      ~& = \sum_{k=1}^{|\mc{R}|}k!\binom{|\mc{R}|}{k}\genfrac\{\}{0pt}{}{|\mc{B}|}{k}
                      = \sum_{k=1}^{|\mc{R}|}\sum_{h = 0}^{k}\binom{|\mc{R}|}{k}\binom{k}{h}h^{|\mc{B}|}(-1)^{k-h}\\
                      & = \sum_{k=0}^{|\mc{R}|}\sum_{h = 0}^{k}\binom{|\mc{R}|}{h}\binom{|\mc{R}| - h}{k -h}h^{|\mc{B}|}(-1)^{k-h}\\
                      & = \sum_{h=0}^{|\mc{R}|}\binom{|\mc{R}|}{h}h^{|\mc{B}|}\sum_{j = 0}^{|\mc{R}|-h}\binom{|\mc{R}| - h}{j}(-1)^{j}\\
      & = \sum_{h=0}^{|\mc{R}|}\binom{|\mc{R}|}{h}h^{|\mc{B}|}(x + 1)^{|\mc{R}|-h}\vert_{x=-1} = |\mc{R}|^{|\mc{B}|}
    \end{split}
  \]
In the second equality, we replaced the definition of Stirling number; in the third, starting the sum from $k=0$ does not change the sum and used $\binom{a}{b}\binom{b}{c} = \binom{a}{c}\binom{a-c}{b-c}$; in the fourth, we relabel $ j = k-h$ and change the limits; in the fifth one, we use the binomial theorem; and in the sixth one, we note that the binomial is only nonzero when $h = |\mc{R}|$.   
\end{proof}
Furthermore, the set $\mc{M}_\mc{N}$ is endowed with a partial order given by the refinement, see \cite{Baldivieso-Monasterios2021}. Two partitions $\mc{C},\mc{D}\in\mc{M}_\mc{N}$ are said to be comparable under refinement $\mc{C}\preceq\mc{D}$ if $\forall c\in\mc{C}$ $c\subset d$ for some $d\in\mc{D}$. The pair $(\mc{M}_\mc{N},\preceq)$ represents a partial order. A subset $\mf{C}\subset\mc{M}_\mc{N}$ is a \emph{chain} if any pair $\mc{C},\mc{D}\in\mf{C}$ can be compared; the set $\mf{C}$ is an \emph{anti-chain} if none of its elements could be compared. This leads to the following result.
\begin{proposition}
  Suppose Assumption~\ref{assump:retailer_coalition} holds. The set $\mc{M}_{\mc R,B}\subset\mc{M}_\mc{N}$ is an \emph{anti-chain} for the order relation $\preceq$.
  \label{prop:anti_chain}
\end{proposition}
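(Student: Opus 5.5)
To prove Proposition~\ref{prop:anti_chain}, we show that any two distinct partitions $\mc{C},\mc{D}\in\mc{M}_{\mc R,B}$ are incomparable under $\preceq$. The argument is by contradiction. We assume, without loss of generality, that $\mc{C}\preceq\mc{D}$, and we aim to deduce $\mc{C}=\mc{D}$.

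First, we record the structure forced by Assumption~\ref{assump:retailer_coalition}. Every $\mc{C}\in\mc{M}_{\mc R,B}$ has the form $\{S_r\}_{r\in\mc R}$. Each block contains exactly one retailer: $r\in S_r$ by definition, and no other retailer by item~\ref{p4eq:oneseller}. By item~\ref{p4eq:nooverlap}, the blocks are pairwise disjoint. Hence the map $r\mapsto S_r$ is a bijection between $\mc R$ and the blocks of $\mc C$. We take as given that every block of a partition is nonempty, which holds here because each block contains its retailer.

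The key step uses the retailers as labels. Suppose $\mc{C}\preceq\mc{D}$, with $\mc{C}=\{S_r\}$ and $\mc{D}=\{T_r\}$. Each block $S_r\in\mc C$ is contained in some block $T\in\mc D$. Since $r\in S_r\subset T$ and the only retailer in $T$ is its own label, we must have $T=T_r$. Therefore $S_r\subseteq T_r$ for every $r\in\mc R$.

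We then upgrade these inclusions to equalities using the covering property, item~\ref{p4ass:union}. Take any $x\in T_r$. Because $\bigcup_s S_s=\mc N$, there is some $s$ with $x\in S_s\subseteq T_s$. Since the blocks of $\mc D$ are disjoint and $x\in T_r\cap T_s$, it follows that $s=r$, so $x\in S_r$. This gives $T_r=S_r$ for every $r$, hence $\mc C=\mc D$, which contradicts distinctness. So no two distinct elements of $\mc{M}_{\mc R,B}$ are comparable, and the set is an anti-chain.

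The main obstacle is not technical but definitional. Since $\preceq$ is reflexive, the anti-chain property must be read as holding for distinct elements; we state this convention explicitly. The substantive content is then the labelling argument $S_r\subseteq T_r$, which relies on each block containing exactly one retailer.
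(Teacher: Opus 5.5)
Your proof is correct and follows essentially the same route as the paper: the unique retailer in each block forces $S_r\subseteq T_r$, and then the covering of $\mc N$ by $\mc C$, together with the disjointness of the blocks of $\mc D$, rules out any extra buyer in $T_r$, so comparability forces $\mc C=\mc D$. Your remark that the anti-chain property must be read for distinct elements, because $\preceq$ is reflexive, is a useful clarification that the paper's proof leaves implicit.
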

\begin{proof}
Suppose $\mc{C},\mc{D}\in\mc{M}_{\mc R,B}$. Each element $c\in\mc{C}$ has the structure $x = \{r\}\cup{B}_r$ with $r\in\mc{R},~B_r\subseteq\mc{B}$ and similarly for $\mc{D}$. From this, the cardinality of $\mc{C}$ and $\mc{D}$ are exactly $|\mc{R}|$. From the definition of $\preceq$, $\mc{C}\preceq\mc{D}$ iff $\forall c\in\mc{C},\exists d\in\mc{D},~\ts{s.t. }c\subseteq d$. Fixing any $c = \{r\}\cup B_r$, the only possible element in $\mc{D}$ containing it has the form $d = \{r\}\cup \tilde{B}_r$ with $B_r\subseteq \tilde{B}_r$. Now, any  $b \in \tilde{B}_r\setminus B_r$, by Assumption~\ref{assump:retailer_coalition}, is contained within a $c'\in\mc{C}\setminus \{c\}$ such that $c' = \{r'\} \cup \{b\}\cup \{B_{r'}\}$. However $b\in d = \{r\}\cup\tilde{B}_r$, which implies that $c'$ is not contained in any element of $\mc{D}$. Since $c\in\mc{C}$ was chosen arbitrarily, the claim follows. 
\end{proof}
The consequence of Proposition~\ref{prop:anti_chain} is that we can see the set $\mc{M}_{\mc R,B}$ as the set of functions $\psi\colon\mc{B}\to\mc{R}$. We can use this fact to define a method of ordering this anti-chain using a ``buyer reassignment'' idea. Consider a partition defined as a function $\psi$, the function $\psi'$ defined as $\psi'(b) = r'$ and $\psi'(a) = \psi(a)$ for all $a\neq b$. This corresponds only to a consumer changing their retailer. The partitions defining $\mc{C}\sim\psi$ and $\mc{D}\sim\psi'$ can be compared using a pre-order $\preceq_\eta$, \ie $\mc{C} \preceq_\eta \mc{D}$ iff $\eta(\mc{C}) \leq \eta(\mc{D})$ for a function $\eta\colon\mc{M}_{\mc R,B}\to\Rset$. In the next sections, we will define this scalar function that allows us to compare partitions. 
\section{Coalitional Game with Multiple Retailers} \label{p4sec:game}
In this section, we propose an algorithmic way of handling the mixed-integer coupled optimisation problems $\mbb{O}_r(\cdot)$ and $\mbb{O}_b(\cdot)$ for each $r,b\in\mc{N}$. A direct solution can be prohibitively difficult to solve as the number of buyers and retailers grows. Algorithm~\ref{alg:coal_form} defines the coalition-forming procedure that yields a Pareto-optimal solution. 
\subsection{Retailer network and coalitional game}
\label{sec:cost_network}
Each retailer $r\in\mc{R}$ seeks to attract buyers to its coalition by means of incentives, \ie by forming the \emph{retailer cost network} $\mc{G}_r = (S_r,\mc{E}_r)$. The set of edges $\mc{E}_r$ comprises links of the form $(r,b)$ and $(b,d)$ for $d,b\in\mc{B}_r$. The weights $\omega(r,b)$ determine the \emph{direct connection} costs and is the geometric average of all $n-$paths from $r$ to $b$ within $\mc{G}$ such that \[\omega(r,b)=\gamma \biggr(\frac{[A^n]_{rb}}{[B^n]_{rb}}\biggl)^{\frac{1}{n}} + n \varepsilon.\]If $(r,b)\in\mc{E}$, then $n=1$; otherwise, we increase $n$ until we connect $r$ and $b$ by a path of length $n$. The constants $\gamma > 0$ and $\varepsilon > 0$ correspond to a connection fee, and the matrices $ A$ and $ B$ are the adjacency and binary adjacency matrices of $\mc{G}$. We note that the cost increases linearly with the number of walks needed to connect two nodes. The links of the form $(b,d)$ with $b,d\in\mc{B}_r$ represent \emph{aggregate connection} and may represent benefits of joining a coalition. The weight is \[\omega(b,d) = \gamma A_{{b}{d}} + \beta \varepsilon,\] if $A_{{b_i}{b_j}}>0$ where $\beta>0$, and $\omega(b,d)=0$ otherwise. A natural choice to assess the cost of $(S_r,\mc{E}_r)$ is the MST of the associated graph $c(S_r) = \omega(\mc{E}^*_{S_r})$. 
\begin{figure}[t!]
\centering
\begin{tikzpicture}
\pgfdeclarelayer{bg}    
\pgfsetlayers{bg,main}
\draw[ultra thick] (-1,4.75) circle (.3) node[anchor=center] {$b_3$};
\draw[ultra thick](-0.7,7.25)to (0.7,6.3);
\draw[dashed](-0.7,7.15)to (0.65,6.25);
\draw[ultra thick] (-1,7.25) circle (.3) node[anchor=center] {$b_1$};
\draw[ultra thick] (-3.5,6) circle (.3) node[anchor=center] {$r_1$};
\draw (-2.45,7.05) node[anchor=center] {100};
\draw (0.15,6.95) node[anchor=center] {30};
\draw (-2,6.1) node[anchor=center] {80};
\draw (-0.75,6.65) node[anchor=center] {40};
\draw (-2.55,5.05) node[anchor=center] {90};
\draw[ultra thick] (0.7,6) circle (.3) node[anchor=center] {$b_2$};
\draw[ultra thick](-1.3,7.25)to (-3.5,6.3);
\draw[ultra thick](-1,5.05)to (-1,6.95);
\draw[dashed](-1.1,5.05)to (-1.1,6.95);
\draw[ultra thick](-3.5,5.7)to (-1.3,4.75);
\draw[dashed](-3.2,5.8)to (0.4,5.8);
\draw[ultra thick](0.4,5.9)to (-3.2,5.9);
\begin{pgfonlayer}{bg}
  \draw[red!50!black] [thick ] (3.1,5.95) circle (.3) node[anchor=center] {$r_j$};
  \draw[red!50!black] [thick](-0.8,7.5)to (3.1,6.25);
  \draw[red!50!black] [thick](-0.8,5)to (0.55,5.75);
  \draw[red!50!black] [ thick](-0.8,7.05)to (0.45,6.15);
  \draw[red!50!black] [ thick](3.1,5.65)to (-0.75,4.6);
  \draw[red!50!black] [thick](1,5.95)to (2.8,5.95);
  \draw[red!50!black] (2,6.95) node[anchor=center] {$(r_j,b_1)$};
  \draw[red!50!black] (1.95,6.15) node[anchor=center] {$(r_j,b_2)$};
  \draw[red!50!black] (2.1,5) node[anchor=center] {$(r_j,b_3)$};
  \draw[red!50!black] (0.45,5.25) node[anchor=center] {$(b_2,b_3)$};
  \draw[red!50!black] (-0.35,6.2) node[anchor=center] {$(b_1,b_2)$};
\end{pgfonlayer}
\end{tikzpicture}
\caption{Example cost network from retailer $r_1$, and its MST (dashed). Other retailers $r_j$ also incur similar costs (in red).}
\label{p4fig:examplegraph}
\end{figure}
\begin{example}
  For $S_{r_1}=\{r_1,b_1,b_2,b_3\}$, with a cost network given in Fig.~\ref{p4fig:examplegraph}. This network consists of five different edges $\mc{E}_{r_1}=\{(r_1,b_1),(r_1,b_2),(r_1,b_3),(b_1,b_2),(b_1,b_3)\}$. There are seven possible trees containing all nodes. The MST contains the edges $\{(r_1,b_2),(b_1,b_2),(b_1,b_3)\} $ with cost $c(S_{r_1}) = 150$.
\end{example}
With the cost of a coalition, its value function can be obtained by turning the consumer's \emph{minimum cost spanning tree game}\cite{Granot1982} into a \emph{costs-saving} game. Now we can define the characteristic function of the cooperative game $(\mc{N},\nu)$ defined for all coalitions $S_r$ satisfying Assumption~\ref{assump:retailer_coalition} as:
\begin{equation}
\nu({S_r}) = \sum\limits_{b \in {\mc{B}_r}} \alpha_b\omega( {r},{b})  -  c (S_r),
\label{p4eq:coalvalue}
\end{equation}
where $\alpha_b \geq 1$ for all $b\in S_r$, and $\nu(\cdot)$ equals the sum of savings of all individual consumers, that is, the cost for being the only consumer in the coalition minus the cost of the coalition with other consumers. The following assumption is technical and will allow us to link the game $(\mc{N},\nu)$ with the optimisation problems~\eqref{p4eq:opt_mi}.
%
\begin{assumption}
  For each $b\in\mc{B}$, the function $U_b(\cdot,\cdot)$ in the profit $\Pi_b(P,\lambda) = U_b(P,r) - \lambda P$  can be written as $U_b(P,r) = \tilde{U}_b(P) - \omega(r,b)P^2$ with $\tilde{U}_b(P)$ satisfying Assumption~\ref{assum:costs}.
  \label{assum:cost_weights}
\end{assumption}

\subsection{Coalition Forming}
\label{p4sec:algo}
In this section, we construct the arrangement of retailers and buyers, as shown in Algorithm~\ref{alg:coal_form}, using an iterative approach that converges to a point on the \emph{Pareto front} of the optimisation problem~\eqref{p4eq:opt_mi}. The Pareto front is defined as
\begin{definition}
  Consider a function $F\colon \Rset^{|\mc{R}|}\times\Rset^{|\mc{B}|}\times \mc{M}_{\mc R,B} \to \Rset^{|\mc{R}| + |\mc{B}|}$, a partial order relation $\preceq$ on $\Rset^{|\mc{R}| + |\mc{B}|}$ and $\mc{Y} = F(\Rset^{|\mc{R}|},\Rset^{|\mc{B}|},\mc{M}_{\mc R,B})$. The \emph{Pareto Front} of $(F,\preceq)$ is 
  \begin{equation}
    \label{eq:pareto}
    \mc{P}(F,\preceq) = \{y\in\mc{Y}\colon\not\exists y'\in\mc{Y},~y\preceq y'\ts{ and }y\neq y'\}.
  \end{equation}
  A point $y\in\mc{P}(F,\preceq)$ is \emph{Pareto Optimal}.
  \label{def:pareto}
\end{definition}
This set contains all those points $(x,\mc{C})$ that cannot be dominated by any other point in the partial order $\preceq$. The concatenation of the objectives of \eqref{p4eq:opt_mi} results in a vector of profits from retailers and buyers $(\Pi_{r_1},\ldots,\Pi_{r_{|\mc{R}|}},\Pi_{b_1},\ldots,\Pi_{b_{|\mc{B}|}}) = F(x,\mc{C})\in\Rset^{|\mc{R}| + |\mc{B}|}$ where $x = (\lambda_\mc{R},P_\mc{B})$. Using a scalarisation approach, see \cite{Giagkiozis2015}, we can define the following problem for the network for any $\mu\in\Rset^{|\mc{R}|+|\mc{B}|}$ with $\md{1}^\top\mu = 1$ and $\mu \in [0,1]^{|\mc{R}|+|\mc{B}|}$ as
\begin{equation}
  \label{eq:scalarised_sol}
  \eta_{\mu}(\mc{C}) = \max_{x\in\mc{X}(\mc{C})}\mu^\top F(x,\mc{C}).
\end{equation}
A solution of \eqref{eq:scalarised_sol} $x^*_\mu \in \mc{X}(\mc{C})$ corresponds to a point on the Pareto front contained on the set $\{\mc{C}\}\times\mc{X}(\mc{C})$ as shown in \cite{Miettinen1999}. The function $\eta_\mu\colon\mc{M}_{\mc R,B}\to\Rset$ determines the best achievable scalarised welfare/utility when the market is structured following $\mc{C}\in\mc{M}_{\mc R,B}$. The desired object is $\max_{\mc{C}\in\mc{M}_{\mc R,B}}\eta_\mu(\mc{C})$ but its direct computation may require large discrete optimisations. In contrast, Algorithm \ref{alg:coal_form} navigates the space $\mc{M}_{\mc R,B}$ from an initial partition $\mc{C}^0$ towards a $\mc{C}^*$ based on a steepest ascent on the partition / coalitions. The scalarised optimisation problem~\eqref{eq:scalarised_sol} needs to be solved at each iteration, but this problem could be split into $|\mc R|$ distinct problems, each solved in a distributed manner (see Proposition~\ref{prop:separable}). Each consumer $b\in\mc{B}$ then changes retailer, leading to a new partition $\tilde{\mc{C}}$, that yields improvement in $\eta_\mu(\tilde{\mc C})$. The Algorithm then accepts the best-improving move if it exists; otherwise, it stops. When no single-buyer deviation improves $\eta_\mu$: local optimum for the improvement-path relation has been reached. In the next section, we show that Algorithm~\ref{alg:coal_form} stops after a finite number of iterations and the outcome $(x^*,\mc{C}^*)_\mu$ lies on the Pareto front for each $\mu$. 

\begin{algorithm}[t!]
  \label{alg:coal_form}
  \SetAlgoLined
  \KwData{$\forall r\in\mc{R}$, Initial guess $S_r$, $P_r^{max}$, and $C_r(\cdot,\cdot)$; $\forall b\in\mc{B}$, $U_b(\cdot,\cdot)$.}
  \KwResult{$\forall r\in\mc{R}$, optimal $S_r^*,~\lambda_r^*$; $\forall b\in\mc{B}$, optimal $P_b^*$.}
  \Repeat{$\{S_r\}_{r\in\mc{R}}$ is unchanging}{
    Compute $(\lambda_\mc{R}^\star,P_\mc{B}^\star)\in\arg\max\ \eta_\mu\big(\mc{C}\big)$\\
    Set $\eta_{\text{curr}} \leftarrow \eta_\mu\big(\mc{C}\big)$\\
    Initialise $\Delta_{\max}\leftarrow 0$ and $b^\star \leftarrow \emptyset,\ r^\star \leftarrow \emptyset$\\
    \ForEach{$b\in\mc{B}$}{
      Find $r_0\in\mc{R}$ such that $b\in S_{r_0}$\\
      \ForEach{$r\in\mc{R}\setminus\{r_0\}$}{
        $\tilde S_{r_0}\leftarrow S_{r_0}\setminus\{b\}$,\quad $\tilde S_r\leftarrow S_r\cup\{b\}$\\
        $\tilde S_{r'}\leftarrow S_{r'}$ for all $r'\in\mc{R}\setminus\{r_0,r\}$\\
        Compute $(\lambda_\mc{R}^\star,P_\mc{B}^\star)\in\arg\max\ \eta_\mu\big(\tilde{\mc{C}}\big)$\\
        Set $\eta_{\text{cand}} \leftarrow  \eta_\mu\big(\tilde{\mc{C}}\big)$, $\Delta \leftarrow \eta_{\text{cand}}-\eta_{\text{curr}}$\\
        \If{$\Delta>\Delta_{\max}$}{
          $\Delta_{\max}\leftarrow\Delta$, $b^\star\leftarrow b$,\quad $r^\star\leftarrow r$\\
        }
      }
    }
    \If{$\Delta_{\max}>\varepsilon$}{
      Let $r_0$ be the unique retailer such that $b^\star\in S_{r_0}$\;
      $S_{r_0}\leftarrow S_{r_0}\setminus\{b^\star\}$,\quad $S_{r^\star}\leftarrow S_{r^\star}\cup\{b^\star\}$\\
    }
  }
\caption{Coalition formation algorithm.}
\end{algorithm}
\subsection{Properties of the Coalitional Game $(\mc{N},\nu)$}
\label{p4sec:mcstvspc}
In this section, we show that coalitions formed using Algorithm~\ref{alg:coal_form} are stable in a game-theoretic sense. For $r\in\mc{R}$ and $S_r\subset\mc{N}$, the MST induces a partial order in the elements of $S_r$ with respect to $r$, \ie $b\succ_{r} d$ if $d$ lies in the path connecting $b$ to $r$.

We will first investigate the existence of a solution to the \emph{minimum cost spanning tree} (MCST) game, namely, the existence of a nonempty core given a retailer's coalition $S_r$. Non-emptiness of the core is a known property of convex games; moreover, \cite{Granot1982} has shown that MCST and convex games are PC games and all PC games possess a non-empty core. We are now ready to present the following results:
\begin{theorem}
  Suppose Assumption~\ref{assump:retailer_coalition} holds. For $r\in\mc{R}$, $( S_r,c)$ is a PC game with $c\colon 2^{S_r}\to\Rset$ the MST cost.
\label{thm:retailer_PC}
\end{theorem}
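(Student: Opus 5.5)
The plan is to reduce the statement to Granot and Huberman's result on minimum cost spanning tree games \cite{Granot1982}. That result says that an MCST game is permutationally convex with respect to the order in which its players are attached along the minimum spanning tree. Concretely, I will check that $(S_r,c)$ is exactly an MCST game rooted at the retailer $r$, and then exhibit the permutation $\pi$ required in the definition of a PC game.

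\emph{Step 1: identify the MCST structure.} By Assumption~\ref{assump:retailer_coalition}, $S_r=\{r\}\cup\mc{B}_r$ contains exactly one retailer. The retailer cost network $\mc{G}_r$ therefore has a distinguished source $r$, and its players are the buyers in $\mc{B}_r$. The edge weights are fixed numbers:
\begin{itemize}
\item the source edges carry $\omega(r,b)>0$;
\item the buyer--buyer edges carry $\omega(b,d)\ge 0$.
\end{itemize}
Neither weight depends on the coalition, so for $T\subseteq \mc{B}_r$ the cost $c(T\cup\{r\})$ is the MST weight of the subgraph induced by $T\cup\{r\}$. This is precisely the characteristic function of the MCST game on the complete graph over $\{r\}\cup\mc{B}_r$. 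Where an edge $(b,d)$ is absent, I will complete the graph with weight $+\infty$, or with a large finite weight if $\omega(b,d)=0$ is read as "no link". The source edges always exist, because $\omega(r,b)$ is defined for every $b$ via $n$-paths, so the game is well defined and $c(\{r\})=0$.

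\emph{Step 2: construct the permutation.} Fix an MST $T^*$ of $\mc{G}_r$ and order $\mc{B}_r$ by the partial order $\succ_r$ introduced above, extended to a linear order $\pi$ compatible with it. One natural extension is a Prim-type ordering from $r$, in which each buyer appears after every vertex on its tree path to $r$. This gives the chains $[k]$. For $S\subseteq \mc{B}_r\setminus[k]$ and $j\le k$, I need the marginal-cost inequality
\[
c([k]\cup S)-c([k])\le c([j]\cup S)-c([j]),
\]
which is the cost-game form of the PC condition; the saving game $\nu$ in \eqref{p4eq:coalvalue} then has the inequality in the stated direction. The intended argument is as follows. Because $[k]$ is a subtree of $T^*$ containing $r$, I have $c([k])=\omega(T^*|_{[k]})$. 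Connecting $S$ to the larger connected component $[k]\supseteq[j]$ can only be cheaper, since every spanning tree of $[j]\cup S$ can be reattached to $[k]$ using the tree edges of $[k]\setminus[j]$. Combining this with the cycle/cut property of MSTs gives the desired inequality.

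\emph{Main obstacle.} The hard step is the exchange argument in Step 2. Here I would follow the proof in \cite{Granot1982} rather than redo it: take an MST $T_{j,S}$ of $[j]\cup S$, add the edges of $T^*$ inside $[k]\setminus[j]$ to obtain a connected graph on $[k]\cup S$, and use that each added edge is a minimum-weight cut edge (the Prim property of $\pi$) to bound $c([k]\cup S)\le c([j]\cup S)+c([k])-c([j])$. A secondary technical point is ties, or an MST that is not unique. I would handle this by fixing one MST and breaking ties consistently, noting that the costs $c(\cdot)$ do not depend on which MST is chosen.
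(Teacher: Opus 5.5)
Your proof is correct, but it takes a different route from the paper's.

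\textbf{The paper's route.} The paper argues directly from the tree order $\succ_r$. It asserts an explicit decomposition $c(T\cup[b])=\omega(\mc{E}_b^T)+c([b])$, where $\mc{E}_b^T$ is the union of MST paths joining $T$ to $[b]$. It then splits on whether $T$ contains a node dominating $b$, compares path weights such as $\omega(\phi(d,b_k))$ and $\omega(\phi(d,b))$, and ends its second case with a strict inequality.

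\textbf{Your route.} You use the Granot--Huberman exchange argument, which rests on two facts only.
(i) If $\pi$ is any linear extension of $\succ_r$ with $r$ first, each initial segment $[k]$ spans a subtree of $T^*$. By the cycle property, $T^*$ restricted to $[k]$ is then an MST of the induced subgraph, so $c([k])-c([j])$ is exactly the weight of the $T^*$-parent edges of the vertices in $[k]\setminus[j]$.
(ii) Grafting those edges onto an MST of $[j]\cup S$ gives a connected graph on $[k]\cup S$ with one edge fewer than vertices, i.e.\ a spanning tree. Hence $c([k]\cup S)\le c([j]\cup S)+c([k])-c([j])$.

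\textbf{What each approach buys.} Your route never needs to know what the MST of $[j]\cup S$ or $[k]\cup S$ looks like. The paper's decomposition needs care here, because the MST paths from $T$ to $[b]$ may pass through vertices outside $T\cup[b]$. Your route also gives the non-strict inequality, which is all PC requires and is the right statement, since equality can occur. The paper's version, in exchange, tries to identify the marginal cost explicitly.

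\textbf{Direction of the inequality.} Both arguments establish decreasing marginal costs for $c$. Equivalently, they give the stated PC inequality for the savings game $\nu$, since $\nu$ differs from $-c$ by an additive term. So your reading of the direction agrees with the paper's.

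\textbf{Two small corrections to your sketch.}
First, the edges to graft are the parent edges of the vertices in $[k]\setminus[j]$. Some of these join $[k]\setminus[j]$ to $[j]\cup\{r\}$, so it is not enough to take the edges with both ends inside $[k]\setminus[j]$; without the connecting ones the grafted graph is disconnected.
Second, the minimum-weight-cut (Prim) property is not needed. The weight accounting in (ii) is exact, so any linear extension of $\succ_r$ works, not only a Prim ordering.
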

\begin{proof}
  For a consumer $b\in\mc{B}\cap S_r$, and $[b]\subset S_r$ an MST ordered set with $b\succ_r b_j$ where $b_j\in[b]$ satisfies $c([b]) > c([b_j])$ and $[b_j]\subset [b]$. Now, given a finite  $T\subset S_r\setminus [b]$, there exist a path $\phi(v_0,v_f)$ in $(S_r,\mc{E}_r)$ such that its end points $v_0,v_f\in S_r$ satisfy $v_0\in T$ and $v_f\in [b]$. The minimum spanning tree for $T\cup[b]$ contains all the edges from the path $\phi(r,b)$ and all those edges $\mc{E}_{b}^T = \bigcup_{t\in T}\bigcup_{b\in[b]}\phi(b,t)$, \ie $c(T\cup[b]) = \omega(\mc{E}_b^T) + c([b])$.  There are two possible cases: the set $T$ contains elements $d\succ_r b$ or $d\not\succ_r b$. For the former, the increment $c(T\cup [b]) - c([b]) = \omega(\mc{E}_b^T)$ which implies\[\begin{split}c(T\cup [b]) - c([b]) & = \omega(\mc{E}_b^T)+c([b_k]) - c([b_k])\\ c(T\cup [b]) - c([b]) & = c(T\cup[b_k]) - c([b_k])\end{split}\]for all $b_k\succ_r b$. For the second case, the set $T\subset S_r\setminus [b]$ contains at least a player that dominates $b\in\mc{B}\cap S_r$. The set of edges $\mc{E}_b^T$ contains two components: the path $\phi(d,b)$ and the remaining edges $\mc{E}_b^T\setminus\phi(d,b)$. Clearly, the increment \[c(T\cup [b]) - c([b]) = \omega(\phi(d,b))+ \omega(\mc{E}_b^T\setminus\phi(d,b)).\]On the other hand, for $b\succ_r b_k$,\[c(T\cup [b_k]) - c([b_k]) = \omega(\phi(d,b_k)) + \omega(\mc{E}_{b_k}^T\setminus\phi(d,b_k)).\]Since $b_k\succ_r d$, then $\omega(\phi(d,b_k)) >\omega(\phi(d,b))$ and $\omega(\mc{E}_{b_k}^T\setminus\phi(d,b_k)) = \omega(\mc{E}_b^T\setminus\phi(d,b))$. As a result,\[c(T\cup [b]) - c([b])<c(T\cup [b_k]) - c([b_k])\]for all $T\subset S_r\setminus[b]$. The MCST $(S_r,c)$ is a PC game.
\end{proof}
\begin{coro}
  Suppose Assumption~\ref{p4ass:union} holds. For $r\in\mc{R}$, The MCST $( S_r,c)$ has a non-empty core.
  \label{lem:core}
\end{coro}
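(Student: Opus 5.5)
The corollary follows by combining Theorem~\ref{thm:retailer_PC} with the known fact, recalled just before that theorem and due to \cite{Granot1982}, that every permutationally convex game has a non-empty core. The plan is to fix $r\in\mc{R}$ and the coalition $S_r$. The assumption in the statement, together with the rest of Assumption~\ref{assump:retailer_coalition}, makes $S_r=\{r\}\cup\mc{B}_r$ a well-defined player set. Theorem~\ref{thm:retailer_PC} then says $(S_r,c)$ is PC for some ordering $\pi$ of $S_r$; the natural choice is the MST order $\succ_r$ rooted at $r$.

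Rather than simply cite \cite{Granot1982}, I would build a core element explicitly. The Bird-type allocation does this. Take the MST $\mc{E}^*_{S_r}$ rooted at the retailer $r$. Each buyer $b\in\mc{B}_r$ has a unique predecessor $p(b)$ on the tree path $\phi(r,b)$. Assign to $b$ the weight $a_b=\omega(p(b),b)$ of that edge, and set $a_r=0$, since the root pays no edge. Because every edge of a rooted tree is the unique predecessor edge of exactly one non-root node, the allocations add up correctly: $\sum_{i\in S_r}a_i=\omega(\mc{E}^*_{S_r})=c(S_r)$.

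The main step is coalitional rationality. Because $c$ is a cost, the core inequality reads $\sum_{i\in T}a_i\le c(T)$ for every $T\subseteq S_r$, which is the reverse of the direction written for $\nu$ in the notation section. I would show it with the standard Bird argument. Take a minimum spanning tree of $T\cup\{r\}$. For each $b\in T$, look at the first edge on its path towards $r$ in that tree. By the cut/cycle property of the MST of the full coalition, this edge costs at least $\omega(p(b),b)$. Different players of $T$ get different such edges, so summing gives the inequality.

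The one subtlety, and the expected obstacle, is that $c(T)$ must be read as the cost of connecting $T$ to the retailer $r$, i.e.\ the MCST convention. If $T$ does not contain $r$, one must check the paper's convention for $c$ on such sets. I would adopt the Granot--Huberman convention that $r$ acts as the source in every $c(T)$. This convention is what makes the PC property of Theorem~\ref{thm:retailer_PC} meaningful. Granting it, the greedy marginal vector along $\pi$ coincides with the Bird allocation above, and it therefore lies in the core.

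Finally, I would translate the result to the savings game~\eqref{p4eq:coalvalue}. Since $\nu(S_r)=\sum_b\alpha_b\omega(r,b)-c(S_r)$, the vector $a'_b=\alpha_b\omega(r,b)-a_b$ has the right total. It also satisfies the core inequality for $\nu$, because the subtracted term $\sum_b\alpha_b\omega(r,b)$ is additive over players. This gives non-emptiness for the associated cost-saving game as well.
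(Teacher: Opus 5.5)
Your first paragraph is exactly the paper's proof. The paper invokes Theorem~\ref{thm:retailer_PC} together with Granot--Huberman's result that PC games have a non-empty core, and stopping there would match it.

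The problem is the explicit Bird verification you present as the main step. You claim that, for $b\in T$, the first edge on $b$'s path to $r$ in an MST $\Gamma_T$ of $T\cup\{r\}$ costs at least $\omega(p(b),b)$. This is false. The cut property of the full MST only bounds edges that cross the cut separating $b$'s subtree from $r$. In $\Gamma_T$, $b$'s first edge can go to one of its own descendants, and such an edge crosses no such cut.

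A counterexample: take $S_r=\{r,b_1,b_2,b_3\}$ with $\omega(r,b_1)=\omega(b_2,b_3)=1$, $\omega(b_1,b_2)=5$, $\omega(r,b_3)=6$ and $\omega(r,b_2)=\omega(b_1,b_3)=10$. The MST is $\{(r,b_1),(b_1,b_2),(b_2,b_3)\}$, so $(a_{b_1},a_{b_2},a_{b_3})=(1,5,1)$. For $T=\{b_2,b_3\}$ you get $\Gamma_T=\{(b_2,b_3),(b_3,r)\}$, and $b_2$'s first edge costs $1<5$. The core inequality $6\le 7$ still holds, but not edge by edge. Switching to ``the first edge leaving $b$'s subtree'' does not rescue it, because injectivity fails: both $b_2$ and $b_3$ would be charged $(b_3,r)$.

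The standard repair is an exchange argument. For $T\subseteq\mc{B}_r$, consider $\Gamma_T\cup\{(p(j),j)\colon j\in\mc{B}_r\setminus T\}$. It is connected, since every predecessor chain starting outside $T$ ends in $T\cup\{r\}$. It has $|S_r|-1$ edges, so it is a spanning tree of $S_r$. Hence $c(S_r)\le c(T)+\sum_{j\in\mc{B}_r\setminus T}a_j$, which is exactly $\sum_{b\in T}a_b\le c(T)$. With this step your construction becomes a self-contained proof that does not need Theorem~\ref{thm:retailer_PC} at all.

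Your closing remark is correct: the marginal vector along an ancestor-closed order equals the Bird vector, because a connected subtree of an MST is an MST of its vertex set. However, it only gives core membership through this exchange argument or through the PC theorem plus Granot--Huberman, not through your edge matching.

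Your insistence on the source convention is well placed, and the paper glosses over it. Suppose $r$ is a player and, for $r\notin T$, $c(T)$ is the MST cost of $T$ alone. Then the core can be empty. For example, $S_r=\{r,b_1,b_2\}$ with $\omega(r,b_i)=10$ and $\omega(b_1,b_2)=1$ forces $a_r\le 0$ and $a_{b_1}+a_{b_2}\le 1$, against a required total of $c(S_r)=11$.
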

The proof of Corollary~\ref{lem:core} follows from \cite[Theorem 1]{Granot1982}. The non-emptiness of the core for cost networks enables us to study and demonstrate how competition among retailers results in greater profits, which is reflected in the cooperative game $(\mc{N},\nu)$ being subadditive. In the next result, we show this is the case. 
\begin{theorem}
  Suppose Assumption~\ref{assump:retailer_coalition} holds. The coalitional game with multiple energy retailers $(\mc{N},\nu)$ with value function $\nu(\cdot)$ defined in~\eqref{p4eq:coalvalue} is subadditive, \ie for $r,s\in\mc{R}$
  \begin{equation}\label{p4eq:subadd}
    v(S_r \cup S_s)\leq v(S_s)+v(S_r).
  \end{equation}
\label{p4thm:comps}
\end{theorem}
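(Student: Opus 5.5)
We would prove \eqref{p4eq:subadd} by writing out the definition \eqref{p4eq:coalvalue} for the merged set and comparing term by term. The set $S_r\cup S_s$ violates Assumption~\ref{assump:retailer_coalition}, item~\ref{p4eq:oneseller}, so we first fix how $\nu$ is read on it. We take the merged coalition to be served by a single retailer $t\in\{r,s\}$, with buyer set $\mc{B}_r\cup\mc{B}_s$. Every buyer then has direct-connection cost $\omega(t,b)$ to $t$, and the tree cost $c(S_r\cup S_s)$ is the MST of the cost network on $\{r,s\}\cup\mc{B}_r\cup\mc{B}_s$, with $\mc{B}_r$ and $\mc{B}_s$ as two disjoint groups of buyers. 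Under this reading,
\[
\nu(S_r\cup S_s)=\sum_{b\in\mc{B}_r\cup\mc{B}_s}\alpha_b\,\omega(t,b)-c(S_r\cup S_s),
\]
and since $\mc{B}_r\cap\mc{B}_s=\emptyset$, the claim reduces to
\[
c(S_r)+c(S_s)-c(S_r\cup S_s)\;\le\;\sum_{b\in\mc{B}_r}\alpha_b\big(\omega(r,b)-\omega(t,b)\big)+\sum_{b\in\mc{B}_s}\alpha_b\big(\omega(s,b)-\omega(t,b)\big).
\]

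The key step is to bound the left-hand side, the saving in tree cost obtained by merging. An MST of the merged network restricted to $S_r$ does not have to be spanning. The cleaner route is the reverse direction: take an MST $T^*$ of $S_r\cup S_s$ and delete one edge on the path joining $r$ to $s$. This splits $T^*$ into two components, and we reconnect them using edges of the individual cost networks $\mc{E}_r$ and $\mc{E}_s$. This construction controls $c(S_r)+c(S_s)$ by $c(S_r\cup S_s)$ plus the weight of those reconnecting edges. Each reconnecting edge is a direct edge $(r,b)$ or $(s,b)$, or an aggregate edge.

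We will then charge each extra direct edge $(r,b)$ to the corresponding term $\alpha_b\omega(r,b)$ on the right-hand side, using $\alpha_b\ge1$ and nonnegativity of all weights. The hop penalty $n\varepsilon$ in $\omega(t,b)$ helps here. A buyer of $\mc{B}_s$ that must now reach $t=r$ sits at graph distance at least that of its own retailer path, so the differences $\omega(s,b)-\omega(t,b)$ should have controllable sign. We would choose $t$ to be the retailer that makes both sums largest, i.e.\ the better-connected of $r$ and $s$.

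\textbf{Main obstacle.} It is not obvious that a single merged retailer $t$ keeps $\sum_b\alpha_b(\omega(\cdot,b)-\omega(t,b))$ nonnegative: buyers far from $t$ can have $\omega(t,b)$ much larger than their original connection cost. If this fails, the fallback is a different reading of the merged coalition. Let the merged coalition keep both retailers as roots, so the direct term stays $\sum_{b\in\mc{B}_r}\alpha_b\omega(r,b)+\sum_{b\in\mc{B}_s}\alpha_b\omega(s,b)$. Then subadditivity reduces exactly to superadditivity of the tree cost, $c(S_r\cup S_s)\ge c(S_r)+c(S_s)$. This would be argued from the fact that the merged spanning structure must still contain a spanning tree of each retailer's component. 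Alternatively, it would follow from the PC property of Theorem~\ref{thm:retailer_PC}, applied with the ordering in which the two retailer blocks are inserted consecutively.
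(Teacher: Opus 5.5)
\textbf{There is a genuine gap.} You never settle what $\nu(S_r\cup S_s)$ is. Under each of your two readings, the inequality you reduce to is either left unproved or false. The missing idea is that the merged set is not a coalition of the game at all. Because $S_r\cup S_s$ contains two retailers, it violates item~\ref{p4eq:oneseller} of Assumption~\ref{assump:retailer_coalition}. Formula \eqref{p4eq:coalvalue} is only defined on single-retailer coalitions, and the paper assigns such sets value zero, $\nu(S_r\cup S_s)=0$; the same convention later gives $\nu(\mc{N})=0$. The theorem then reduces to $\nu(S_r)\ge 0$ for every $r$, which is immediate. The star $\{(r,b)\colon b\in\mc{B}_r\}$ is a spanning tree of $\mc{G}_r$, so
\[
c(S_r)\le\sum_{b\in\mc{B}_r}\omega(r,b)\le\sum_{b\in\mc{B}_r}\alpha_b\,\omega(r,b),
\]
using $\alpha_b\ge 1$ and $\omega\ge 0$. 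No comparison of tree costs across retailers is needed.

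\textbf{Main route.} You need the merging saving $c(S_r)+c(S_s)-c(S_r\cup S_s)$ to be bounded by $\sum_b\alpha_b(\omega(\cdot,b)-\omega(t,b))$, but nothing in the model controls either side. Partitions in $\mc{M}_{\mc{R},\mc{B}}$ are arbitrary, so buyers need not be closer to their own retailer. If they are, the right-hand side is $\le 0$, which is the wrong sign for you. Meanwhile the merged tree can use cross aggregate edges $(b,d)$ with $b\in\mc{B}_r$, $d\in\mc{B}_s$. These edges are absent from $\mc{G}_r$ and $\mc{G}_s$ and push the left-hand side up. You flag this obstacle yourself and do not resolve it.

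\textbf{Fallback route.} Here you need $c(S_r\cup S_s)\ge c(S_r)+c(S_s)$, and this fails. Take the natural two-rooted reading, a spanning forest with one retailer per component. Then the union of the two separate MSTs is feasible, so the inequality runs the other way. It is strict as soon as some cross edge $(b,d)$ is cheaper than an edge on the $r$--$b$ or $d$--$s$ tree path. Your justification, that the merged structure contains a spanning tree of each component, is the restriction fallacy you pointed out a few lines earlier. Theorem~\ref{thm:retailer_PC} concerns the game on a single $S_r$, so it says nothing about a set containing two retailers.
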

\begin{proof}
  For $r,s\in\mc{R}$ and from Assumption~\ref{assump:retailer_coalition}, $\nu(S_r\cup S_s)=0$. On the other hand, $\nu(S_r) \geq 0$ since~\eqref{p4eq:coalvalue} depends on the sum of direct connection costs $\omega(r,b)$ with $b\in\mc{B}\cap S_r$, which is at worst equal to the MST cost $c(S_r)$. Therefore \eqref{p4eq:subadd} holds for any $r,s\in\mc R$.
\end{proof}
As mentioned in \cite{Chakraborty2019}, subadditivity alone is not sufficient to establish the stability of the game or the satisfaction of members of a given coalition, \ie there is no incentive to disband coalitions. To do this, we utilise the following result on concavity.
\begin{theorem}\label{p4thm:concavity}
  Suppose Assumption~\ref{assump:retailer_coalition} holds. The game with multiple energy retailers $(\mathcal{N},\nu)$ satisfies for any $r,s\in\mc{R}$,
  \begin{equation} \label{p4eq:concav}
    \nu(S_r \cup S_s)+\nu(S_r \cap S_s)\leq v(S_r)+v(S_s).
  \end{equation}
\end{theorem}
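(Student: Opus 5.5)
The inequality \eqref{p4eq:concav} will follow from the structural restrictions of Assumption~\ref{assump:retailer_coalition} combined with the sign information already used in Theorem~\ref{p4thm:comps}. The plan is to evaluate each of the four terms separately and then compare them.

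Fix $r,s\in\mc{R}$. If $r=s$, the two sides of \eqref{p4eq:concav} coincide because $S_r\cup S_s=S_r\cap S_s=S_r$. We may therefore assume $r\neq s$.

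\emph{Intersection term.} By item~\ref{p4eq:nooverlap} of Assumption~\ref{assump:retailer_coalition}, $S_r\cap S_s=\emptyset$. Since the game is normalised, $\nu(\emptyset)=0$.

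\emph{Union term.} The set $S_r\cup S_s$ contains the two retailers $r$ and $s$. By item~\ref{p4eq:oneseller} it is therefore not an admissible retailer coalition. As in the proof of Theorem~\ref{p4thm:comps}, the characteristic function \eqref{p4eq:coalvalue} is defined only on coalitions of the form $\{r\}\cup\mc{B}_r$. I will adopt the same convention used there, namely that inadmissible coalitions produce no savings, so $\nu(S_r\cup S_s)=0$. The left-hand side of \eqref{p4eq:concav} is then $0$.

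\emph{Right-hand side.} It remains to show $\nu(S_r)\geq 0$ and $\nu(S_s)\geq 0$. I will argue this explicitly rather than only cite Theorem~\ref{p4thm:comps}. The star tree $\{(r,b)\colon b\in\mc{B}_r\}$ spans $S_r$ and has weight $\sum_{b\in\mc{B}_r}\omega(r,b)$. Because the MST minimises weight over all spanning trees, $c(S_r)\leq\sum_{b\in\mc{B}_r}\omega(r,b)$. All weights are positive, since $\gamma,\varepsilon>0$ and $A$ is nonnegative, and $\alpha_b\geq 1$. Hence
\[
\sum_{b\in\mc{B}_r}\alpha_b\,\omega(r,b)\;\geq\;\sum_{b\in\mc{B}_r}\omega(r,b)\;\geq\;c(S_r),
\]
which gives $\nu(S_r)\geq 0$. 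The same argument gives $\nu(S_s)\geq 0$. Combining the three steps yields \eqref{p4eq:concav}.

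\emph{Main obstacle.} The delicate step is justifying $\nu(S_r\cup S_s)=0$, since \eqref{p4eq:coalvalue} is stated only for admissible coalitions. I would make the convention explicit: $\nu$ is extended to all of $2^{\mc{N}}$ by setting it to zero outside the family of coalitions allowed by Assumption~\ref{assump:retailer_coalition}. Under that convention $\nu(S_r\cup S_s)=0$ holds, and it is exactly the convention already relied on in Theorem~\ref{p4thm:comps}. The MST comparison with the star tree is routine.
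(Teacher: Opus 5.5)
Your proposal is correct and follows the same route as the paper. Both terms on the left vanish: the intersection is empty by Assumption~\ref{assump:retailer_coalition}, and the union contains two retailers and is assigned value zero by the same convention the paper uses. Meanwhile $\nu(S_r),\nu(S_s)\geq 0$, as in Theorem~\ref{p4thm:comps}.

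You also make explicit two details the paper leaves implicit. First, your star-tree bound $c(S_r)\leq\sum_{b}\omega(r,b)\leq\sum_{b}\alpha_b\omega(r,b)$ justifies nonnegativity directly. Second, your separate treatment of $r=s$ matters because there the paper's claims $\nu(S_r\cap S_s)=\nu(\emptyset)=0$ and $\nu(S_r\cup S_s)=0$ do not apply, although the inequality still holds with equality.
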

\begin{proof}
  As consequence of~\eqref{p4eq:opt_mi} and Assumption~\ref{assump:retailer_coalition}, $v(S_i \cup S_j)=0$ and $v(S_i \cap S_j)=v(\emptyset)=0$. From Theorem~\ref{p4thm:comps}, $v(S_r)\geq0$ holds for all $r\in\mc{R}$. Therefore \eqref{p4eq:concav} holds.
\end{proof}
A straight consequence of the above theorem is:
\begin{coro}[Coalition savings]
  For a retailer $r\in\mc{R}$, if its associated coalition $S_r$ consists of two or more consumers, then for all $b \in S_r$, $\nu(\{r,b\})\leq \nu(S_r)$.
  \label{coro:savings}  
\end{coro}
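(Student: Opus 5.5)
We need to show that for $b\in\mc{B}_r$, with $|\mc{B}_r|\geq 2$, $\nu(\{r,b\})\leq\nu(S_r)$. The plan is to compute both sides directly from \eqref{p4eq:coalvalue} and compare them through the MST structure.

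First, the singleton coalition. For the coalition $\{r,b\}$ the cost network $\mc{G}_r$ has the single edge $(r,b)$. Hence $c(\{r,b\})=\omega(r,b)$ and
\[
\nu(\{r,b\}) = \alpha_b\omega(r,b)-\omega(r,b)=(\alpha_b-1)\,\omega(r,b),
\]
which is nonnegative because $\alpha_b\geq 1$.

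Second, a bound on $c(S_r)$. Take any spanning tree of $(S_r,\mc{E}_r)$ that contains the edge $(r,b)$. For each remaining $d\in\mc{B}_r\setminus\{b\}$, attach $d$ either by its direct edge $(r,d)$ or by an aggregate edge. Choosing the star tree (all direct edges), minimality of the MST gives
\[
c(S_r)\leq \omega(r,b)+\sum_{d\in\mc{B}_r\setminus\{b\}}\omega(r,d).
\]
Substituting into \eqref{p4eq:coalvalue} yields
\[
\nu(S_r)\geq (\alpha_b-1)\,\omega(r,b)+\sum_{d\neq b}(\alpha_d-1)\,\omega(r,d)\geq \nu(\{r,b\}).
\]
The final inequality uses $\alpha_d\geq 1$ and $\omega(r,d)>0$, which holds since $\gamma,\varepsilon>0$.

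This is essentially the same reasoning behind $\nu(S_r)\geq 0$ in the proof of Theorem~\ref{p4thm:comps}, now carried out relative to the sub-coalition $\{r,b\}$. One could also phrase it through Theorem~\ref{p4thm:concavity} and monotonicity of the savings. However, the direct star-tree comparison is cleaner, since Theorem~\ref{p4thm:concavity} only involves unions of distinct retailer coalitions and is vacuous here.

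The main obstacle is interpretive: making sure the MST in \eqref{p4eq:coalvalue} is taken over the full cost network $\mc{G}_r$, which includes all direct edges $(r,d)$. The star tree must be an admissible spanning tree, so that the bound $c(S_r)\leq\sum_{d\in\mc{B}_r}\omega(r,d)$ is legitimate. If instead only some direct edges were present, I would replace the star by the tree obtained by adding $(r,b)$ to an MST of the rest and argue via the cut property. Once this is settled, the inequality is immediate. The hypothesis of two or more consumers is only needed for the statement to be nontrivial.
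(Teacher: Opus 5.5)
Your proof is correct. Its key ingredient is the one the paper relies on: the star-tree bound $c(S_r)\le\sum_{d\in\mc{B}_r}\omega(r,d)$, which is what justifies $\nu(S_r)\ge 0$ in the proof of Theorem~\ref{p4thm:comps}.

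The paper's own proof is two lines. It writes $\nu(\{r,b\})=\omega(r,b)-c(\{r,b\})=0$ and concludes from $\nu(S_r)\ge 0$. That computation silently takes $\alpha_b=1$. When $\alpha_b>1$, one has $\nu(\{r,b\})=(\alpha_b-1)\,\omega(r,b)>0$, and nonnegativity of $\nu(S_r)$ alone no longer gives the inequality.

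Your term-by-term comparison, $\nu(S_r)\ge\sum_{d\in\mc{B}_r}(\alpha_d-1)\,\omega(r,d)\ge(\alpha_b-1)\,\omega(r,b)$, covers the general case $\alpha_b\ge 1$ allowed in \eqref{p4eq:coalvalue}. It is therefore the more careful version of the same idea. You are also right that Theorem~\ref{p4thm:concavity} adds nothing here, even though the paper calls the corollary a consequence of it. Finally, your last step only needs $\omega(r,d)\ge 0$, not strict positivity.
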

\begin{proof}
  From the definition of $\nu(\cdot)$, $\nu(\{r,b\}) = \omega(r,b) - c(\{r,b\}) = 0$. The result follows.
\end{proof}
\begin{remark}
In both Theorem~\ref{thm:retailer_PC} and \ref{p4thm:concavity} similar inequalities are used to denote either convexity and concavity respectively. As mentioned in \cite{Granot1982}, the difference lies in the nature of the characteristic function used in the game: for the former, $c(\cdot)$ is a cost function and measures how expensive it is for players to form a coalition, whereas for the latter, $\nu(\cdot)$ represents savings made by players joining a coalition.
\end{remark}
\subsection{The game $(\mc{N},\nu)$ and the optimisation~\eqref{p4eq:opt_mi}}
\label{sec:game-mcn-nu}

In this section, we formalise how the equilibrium notions of the cooperative game $\mc{N},\nu$ and the coalition formation process in Algorithm~\ref{alg:coal_form} are intertwined with the optimisation problems in~\eqref{p4eq:opt_mi}. We begin by asserting the concavity of linear combinations of 
\begin{lemma}
  Suppose Assumptions~\ref{assum:costs} and~\ref{assum:cost_weights} hold. Let\footnote{Given a function $y = F(x_1,x_2)$, we denote $\partial_{x_1}y$ as the partial derivative with respect to $x_1$, and $\partial_{x_1,x_2}^2y$ as the second derivative with respect to $x_1$ and $x_2$.}, for all $b\in\mc{B}\cap S_r$, $[U]_{bb} = \partial_{P_b,P_b}^2U_b(P_b)$, $[U]_{bd} = 0$, $[A]_{b,d} = \partial_{P_b,P_d}^2C_r$, $s_b = -1 + \partial_{\lambda_r,P_b}^2C_r$, and $a = \partial_{\lambda_r,\lambda_r}^2C_r$. Fix any $S_r\in\mc{C}$, and a set of scalars $(\mu_r,\mu_{\mc{B}\cap S_r}) \in (0,1]^{1+|S_r\cap\mc{B}|}$. Suppose $\exists \alpha>0$, satisfying $(\sqrt{\alpha} +\mu_r\sqrt{\bar{\mu}})^2< \mu_r$ where $\bar{\mu}$ is the maximum eigenvalue of $-A^{-1/2}U A^{-1/2}$, such that $-\mu_{\mc{B}\cap S_r}U^{-1}\mu_{\mc{B}\cap S_r} \leq \alpha a$, then $\xi(\lambda_r,P_{\mc{B}\cap S_r}) = \mu_r\Pi_r(\lambda_r,\sum_{b\in\mc{B}\cap S_r}P_b) + \sum_{b\in\mc{B}\cap S_r}\mu_{b}\Pi_b(P_b,\lambda_r)$ is concave.
  \label{lem:scalar_max}
\end{lemma}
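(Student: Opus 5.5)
We will show that the Hessian of $\xi$ with respect to $z = (\lambda_r, P_{\mc{B}\cap S_r})$ is negative semidefinite and then use the standard characterisation of concavity for twice-differentiable functions on a convex domain.

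The first step is to differentiate $\xi$ term by term. The retailer profit $\Pi_r = \lambda_r\sum_b P_b - C_r(\lambda_r,\cdot)$ contributes the bilinear term $\lambda_r\sum_b P_b$ and the convex term $-C_r$. Each consumer profit $\Pi_b = \tilde U_b(P_b) - \omega(r,b)P_b^2 - \lambda_r P_b$ is separable in $P_b$ apart from the bilinear coupling $-\lambda_r P_b$; here Assumption~\ref{assum:cost_weights} is used. We absorb $-2\omega(r,b)$ into $[U]_{bb}$, which keeps $U$ diagonal and negative definite by Assumption~\ref{assum:costs}. Collecting second derivatives gives a block form
\[
\nabla^2\xi = \begin{bmatrix} -\mu_r a & \mu_r s^\top - \mu_{\mc{B}}^\top \\ \mu_r s - \mu_{\mc{B}} & \mathrm{diag}(\mu_{\mc{B}})U - \mu_r A \end{bmatrix},
\]
where signs are fixed by $s_b = -1+\partial^2_{\lambda_r,P_b}C_r$. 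The mixed entry combines $\mu_r(1-\partial^2 C_r)$ and $-\mu_b$, and we will track it carefully. The bottom-right block $M := \mathrm{diag}(\mu_{\mc{B}})U - \mu_r A$ is negative definite because $U\prec 0$ and $A\succeq 0$ by convexity of $C_r$. Its spectrum can be related to $\mu_r$ and $\bar\mu$ through the congruence $A^{-1/2}MA^{-1/2}$, which requires $A\succ0$; we assume this implicitly.

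Next we apply the Schur complement with respect to the negative definite block $M$. It then suffices to prove
\[
-\mu_r a - (\mu_r s-\mu_{\mc{B}})^\top M^{-1}(\mu_r s - \mu_{\mc{B}}) \le 0 .
\]
We bound the quadratic form by splitting $v = \mu_r s - \mu_{\mc{B}}$ and using Cauchy--Schwarz in the $(-M)^{-1}$ inner product, written as $\|v\|_{-M^{-1}} \le \mu_r\|s\|_{-M^{-1}} + \|\mu_{\mc{B}}\|_{-M^{-1}}$. For the $\mu_{\mc{B}}$ part, $-M \succeq -\mathrm{diag}(\mu_{\mc B})U$ implies $-M^{-1}\preceq -U^{-1}$ up to the weights, so the hypothesis $-\mu_{\mc{B}}U^{-1}\mu_{\mc{B}}\le\alpha a$ gives a bound of order $\sqrt{\alpha a}$. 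For the $s$ part, we use $A^{-1/2}$ scaling and the definition of $\bar\mu$ to bound $\mu_r\|s\|_{-M^{-1}}$ by roughly $\mu_r\sqrt{\bar\mu a}$. Squaring gives an upper bound $(\sqrt\alpha+\mu_r\sqrt{\bar\mu})^2 a$, and the hypothesis $(\sqrt\alpha+\mu_r\sqrt{\bar\mu})^2<\mu_r$ then yields the Schur inequality strictly.

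The main obstacle is the $s$-part bound. It requires relating $s^\top(\ldots)^{-1}s$ to $a$, which in effect uses the joint convexity of $C_r$ in $(\lambda_r,P)$: positive semidefiniteness of $\begin{bmatrix} a & (s+\md 1)^\top\\ s+\md 1 & A\end{bmatrix}$ gives $(s+\md1)^\top A^{-1}(s+\md1)\le a$. The $\md 1$ shift and the role of $\bar\mu$ will have to be handled so the constants match the stated condition. We will try first to absorb the $\md 1$ contribution into the $\mu_{\mc{B}}$ term, and accept a slightly stronger constant if the bookkeeping does not close exactly. Once the Schur complement is nonpositive and $M\prec0$, the Hessian is negative semidefinite on the convex box $[\underline\lambda,\bar\lambda]\times[0,\bar\zeta]^{|\mc{B}\cap S_r|}$, so $\xi$ is concave.
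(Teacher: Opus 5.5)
Your route is the paper's route: Hessian, Schur complement on the $P$-block, a $U^{-1}$-type bound on $(-M)^{-1}$, and a split of the cross vector. Your triangle inequality in the $(-M)^{-1}$-norm is exactly the paper's Young inequality with the optimal $\varepsilon=\mu_r\sqrt{\bar\mu/\alpha}$.

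The gap is the step you flag, $\mu_r^2 s^\top(-M)^{-1}s\le\mu_r^2\bar\mu a$. It is not bookkeeping; the inequality runs the wrong way. Since $\bar\mu$ is the \emph{largest} eigenvalue of $-A^{-1/2}UA^{-1/2}$, you only get $-U\preceq\bar\mu A$, i.e.\ $(-U)^{-1}\succeq\bar\mu^{-1}A^{-1}$, which is a \emph{lower} bound on $s^\top(-U)^{-1}s$. The usable upper bounds are $(-M)^{-1}\preceq(\mu_rA)^{-1}$ or $(-U)^{-1}\preceq\underline\mu^{-1}A^{-1}$, with $\underline\mu$ the smallest eigenvalue, and neither produces the factor $\bar\mu$. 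In addition, joint convexity of $C_r$ controls $(s+\md{1})^\top A^{-1}(s+\md{1})\le a$, not $s^\top A^{-1}s\le a$; that joint convexity is also an extra assumption, not part of Assumption~\ref{assum:costs}.

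The paper closes this step with a one-line S-procedure claim: $-U\preceq\hat\beta A$ with $\hat\beta\ge\bar\mu$, together with $s^\top A^{-1}s\le a$. That claim has both defects, so you cannot import it. Your ``up to the weights'' in the $\mu_{\mc{B}}$-part also hides a real mismatch. With the correct block $\mathrm{diag}(\mu_{\mc{B}})U$ you obtain $\sum_b\mu_b/|U_{bb}|$, which exceeds the hypothesised $\sum_b\mu_b^2/|U_{bb}|$ whenever $\mu_b<1$. The paper's displayed Hessian silently drops these weights, and also the factor $\mu_r$ on $a$.

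No choice of constants rescues the argument, because the stated hypotheses do not imply concavity. Take one buyer with $C_r(\lambda,P)=0.05\lambda^2+0.1P^2$ and $U_b(P)=P-0.05P^2$. Then $a=0.1$, $A=0.2$, $U=-0.1$, $s=-1$ and $\bar\mu=0.5$; take $\mu_r=0.5$, $\mu_b=0.01$, $\alpha=0.09$. The hypotheses hold: $(\sqrt\alpha+\mu_r\sqrt{\bar\mu})^2\approx 0.427<\mu_r$ and $-\mu_bU^{-1}\mu_b=0.001\le\alpha a=0.009$. Yet $\xi=0.49\lambda P-0.025\lambda^2-0.0505P^2+0.01P$ has Hessian determinant $0.05\cdot 0.101-0.49^2<0$, so $\xi$ is not concave.

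The structural reason shows once the mixed entry is written correctly. It equals $\mu_r-\mu_b-\mu_r g_b=-(\mu_r s_b+\mu_b)$ with $g=s+\md{1}=\partial^2_{\lambda_r,P}C_r$, not $\mu_r s_b-\mu_b$ as you wrote. So the natural split is $v=(\mu_r\md{1}-\mu_{\mc{B}})-\mu_r g$. Under mere convexity, the bound $-M\succeq\mu_r A$ lets the $g$-part use up the whole budget $\mu_r a$. A correct condition must therefore control $\mu_r\md{1}-\mu_{\mc{B}}$ and leave a convexity margin. For instance, $g^\top A^{-1}g\le\theta a$ with $\theta<1$, together with $\sum_b(\mu_r-\mu_b)^2/(\mu_b|U_{bb}|)\le(1-\sqrt\theta)^2\mu_r a$, suffices by exactly your triangle-inequality argument. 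That is a different statement from the one given.

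Finally, as you suspected, $A\succ 0$ fails whenever $C_r$ depends on the powers only through $\sum_b P_b$ and $|S_r\cap\mc{B}|\ge 2$. In that case $A^{-1/2}$ and $\bar\mu$ are not even defined.
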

\begin{proof}
  The concavity of the desired function can be determined by examining its second derivative. The gradients of $\xi$ are \[
    \begin{split}
      \partial_{\lambda_r}\xi =& \sum_{b\in S_r\cap\mc{B}}(\mu_r-\mu_b)P_b - \mu_r\partial_{\lambda_r}C_r\bigl(\lambda_r,P_{S_r}\bigr) \\
      \partial_{P_b}\xi =& (\mu_r-\mu_b)\lambda_r - \mu_r\partial_{P_b}C_r\bigl(\lambda_r,P_{S_r}\bigr)+ \mu_b\partial_{P_b}U_b\bigl(P_b\bigr).
    \end{split}
  \]
  The Hessian is
  \[
    \begin{split}
      \partial_{\lambda_r,\lambda_r}^2\xi = & - \mu_r\partial^2_{\lambda_r,\lambda_r}C_r\bigl(\lambda_r,P_{S_r}\bigr)\\
      \partial_{\lambda_r,P_b}^2\xi = & \mu_r-\mu_b - \mu_r \partial_{\lambda_r,P_b}^2C_r\bigl(\lambda_r,P_{S_r}\bigr)\\
      \partial_{P_b,P_b}^2\xi = & - \mu_r\partial_{P_b,P_b}^2C_r\bigl(\lambda_r,P_{S_r}\bigr)+ \mu_b\partial_{P_b,P_b}^2U_b\bigl(P_b\bigr)\\
      \partial_{P_b,P_d}^2\xi =  & - \mu_r\partial_{P_b,P_d}^2C_r\bigl(\lambda_r,P_{S_r}\bigr)
    \end{split}
  \]
  The Hessian can be written as \[
    H = \begin{bmatrix}
      -a & (\mu_{\mc{B}\cap S_r} - \mu_r s)^\top \\
      \mu_{\mc{B}\cap S_r} - \mu_r s & U - \mu_rA\\
    \end{bmatrix}
  \]
  where $A\in\Rset^{|S_r\cap\mc{B}||\times |S_r\cap\mc{B}|}$ is, by Assumption~\ref{assum:costs}, is positive definite and $U\in\Rset^{|S_r\cap\mc{B}||\times |S_r\cap\mc{B}|}$ is diagonal and negative definite.   The negative definiteness of $H$ can be obtained using the Schur complement $(\mu_{\mc{B}\cap S_r}-\mu_rs)^\top (\mu_r A  - U)^{-1} (\mu_{\mc{B}\cap S_r}-\mu_rs) \leq \mu_r a$.  Using matrix inequalities, we obtain $\mu_r A - U \succeq - U \iff(\mu_r A - U)^{-1} \preceq -U^{-1}$. Consequently,
\[-(\mu_{\mc{B}\cap S_r}-\mu_rs)^\top U^{-1} (\mu_{\mc{B}\cap S_r}-\mu_rs) \leq \mu_r a.\]
By Young's inequality, for $\varepsilon > 0$,
\[\begin{split}
    -(\mu_{\mc{B}\cap S_r}-\mu_rs)^\top & U^{-1} (\mu_{\mc{B}\cap S_r}-\mu_rs) \leq \\
    -(1+\varepsilon) \mu_{\mc{B}\cap S_r}^\top &U^{-1} \mu_{\mc{B}\cap S_r} - \bigl(1 + \frac{1}{\varepsilon}\bigr)\mu_r^2 s^\top U^{-1} s \leq \\
    (1+\varepsilon)\alpha a + & \bigl(1 + \frac{1}{\varepsilon}\bigr)\mu_r^2 \beta a \leq \\
     ((1+\varepsilon)\alpha a + &\bigl(1 + \frac{1}{\varepsilon}\bigr)\mu_r^2 \bar{\mu} )a.  
  \end{split}\]
In the last inequalities, we have used our hypothesis $ -\mu_{\mc{B}\cap S_r}^\top U^{-1} \mu_{\mc{B}\cap S_r} \leq \alpha a$ and imposed $-s^\top U^{-1} s \leq \bar{\mu} a$. Consider now $\varepsilon = \mu_r\sqrt{\frac{\bar{u}}{\alpha}}$, we obtain
\[
  \begin{split}
    -(\mu_{\mc{B}\cap S_r}-\mu_rs)^\top & U^{-1} (\mu_{\mc{B}\cap S_r}-\mu_rs) \leq\\
    & (\sqrt{\alpha} +\mu_r\sqrt{\bar{\mu}})^2a \leq \mu_r a.
  \end{split}
\]
The last inequality was our hypothesis. What is left to prove is $-s^\top U^{-1} s \leq \bar{\mu} a$. To this end, we use the S-procedure to show $-U\preceq \hat{\beta} A$. Notice that $s^\top A^{-1} s \leq a$ implies $-s^\top U^{-1} s \leq \hat{\beta} a$ provided $\hat{\beta} \geq \bar{\mu}$. 

\end{proof}
Lemma~\ref{lem:scalar_max} provides guidelines for choosing the scalarisation weights to maintain a maximisation problem. We use this result to prove the following:
\begin{proposition}[Separability of the value function]
  Suppose Assumptions~\ref{assump:retailer_coalition}--\ref{assum:cost_weights} hold. Fix a weight vector $\mu\in\Rset^{|\mathcal R|+|\mathcal B|}_+$ such that $\md{1}^\top\mu=1$. Suppose, in addition, that for each $r\in\mc{R}$ and each feasible coalition $S_r=\{r\}\cup B_r$ with $B_r\subseteq \mc{B}$, there is a nonempty feasible set $\mc{X}_r(S_r)$ such that for every $\mc{C}\in\mc{M}_{\mc R,B}$, \(\mc{X}(\mc{C}) \;=\; \prod_{r\in\mc{R}} \mc{X}_r\bigl(S_r\bigr)\). Then, for all $S_r\in\mc{C}$, there exists $V_{r,\mu}\colon \mc{X}_r(S_r)\times\mc{C}\to\Rset$ such that the scalarised optimisation problem \eqref{eq:scalarised_sol} with $x_r = (\lambda_r,P_{S_r\cap\mc{B}})$ can be written as
  \begin{equation}
    \eta_\mu(\mc{C}) = \sum_{r\in\mc{R}} V_{r,\mu}(x_r,S_r).
    \label{eq:sop_sep}
  \end{equation}
\label{prop:separable}
\end{proposition}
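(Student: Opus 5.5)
The plan is to expand the scalarised objective $\mu^\top F(x,\mc{C})$ coordinate by coordinate and regroup the terms by retailer coalition. Under Assumption~\ref{assump:retailer_coalition}, every buyer $b$ lies in exactly one $S_r$. Its profit $\Pi_b(P_b,\lambda_r)=\tilde U_b(P_b)-\omega(r,b)P_b^2-\lambda_rP_b$ (Assumption~\ref{assum:cost_weights}) depends only on $P_b$ and on the price $\lambda_r$ of the retailer it is assigned to. The retailer profit $\Pi_r(\lambda_r,\sum_{b\in\mc{B}\cap S_r}P_b)$ depends only on $\lambda_r$ and the demands of its own buyers. Hence
\[
\mu^\top F(x,\mc{C})=\sum_{r\in\mc{R}}\xi_r(x_r,S_r),\qquad \xi_r(x_r,S_r)=\mu_r\Pi_r\Bigl(\lambda_r,\sum_{b\in\mc{B}\cap S_r}P_b\Bigr)+\sum_{b\in\mc{B}\cap S_r}\mu_b\Pi_b(P_b,\lambda_r),
\]
where $x_r=(\lambda_r,P_{S_r\cap\mc{B}})$. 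The coordinates of $x$ are partitioned into the blocks $\{x_r\}_{r\in\mc{R}}$ precisely because $\mc{C}$ is a retailer-indexed partition, which by Proposition~\ref{prop:anti_chain} we regard as a map $\psi\colon\mc{B}\to\mc{R}$.

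Next I use the product hypothesis $\mc{X}(\mc{C})=\prod_{r}\mc{X}_r(S_r)$. For a sum of functions of disjoint variable blocks over a Cartesian product domain, the maximum equals the sum of the blockwise maxima:
\[
\eta_\mu(\mc{C})=\max_{x\in\prod_r\mc{X}_r(S_r)}\sum_r\xi_r(x_r,S_r)=\sum_r\max_{x_r\in\mc{X}_r(S_r)}\xi_r(x_r,S_r).
\]
I will prove this with the two usual inequalities. For ``$\le$'': for any feasible $x$, each term satisfies $\xi_r(x_r,S_r)\le\max\xi_r$. For ``$\ge$'': concatenating blockwise maximisers $x_r^*$ gives a feasible point, by the product structure, whose objective equals the sum of the maxima. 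I then define $V_{r,\mu}(x_r,S_r):=\xi_r(x_r,S_r)$ evaluated at $x_r=x_r^*(S_r)$, which is equivalently the optimal value of the $r$-th subproblem. This yields \eqref{eq:sop_sep}.

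The main obstacle is ensuring that each blockwise maximum is attained, and preferably unique and tractable, so that $V_{r,\mu}$ is well defined. For existence, I will argue from continuity of $C_r$, $U_b$ and $\omega$ (Assumption~\ref{assum:costs}) together with compactness of $\mc{X}_r(S_r)$. This set is a closed subset of $[\underline\lambda,\bar\lambda]\times[0,\bar\zeta]^{|S_r\cap\mc{B}|}$ cut out by the continuous capacity constraint $|\sum P_b+P_r^{loss}|\le P_r^{\max}$, and it is nonempty by hypothesis, so Weierstrass applies. For concavity, and hence that each subproblem is a well-posed convex program solvable in a distributed way, I will invoke Lemma~\ref{lem:scalar_max} applied to $\xi_r$ with the restricted weights $(\mu_r,\mu_{\mc{B}\cap S_r})$. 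A subtlety is that the statement fixes an arbitrary $\mu$, possibly with zero entries, whereas Lemma~\ref{lem:scalar_max} requires weights in $(0,1]$. I will therefore use concavity only as a remark on solvability and keep the separability argument itself independent of it. The cross-coalition coupling that would break separability, a buyer's choice of retailer, has been frozen by fixing $\mc{C}$, and that is the key point.
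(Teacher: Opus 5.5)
Your proposal is correct and follows essentially the same route as the paper: you regroup $\mu^\top F(x,\mc{C})$ into per-coalition terms $\xi_r(x_r,S_r)$ and use the product structure $\mc{X}(\mc{C})=\prod_r\mc{X}_r(S_r)$ to exchange the maximum of the sum with the sum of the blockwise maxima. Your justification of attainment via continuity and compactness (Weierstrass) is more appropriate than the paper's appeal to Lemma~\ref{lem:scalar_max}, which gives concavity but not existence, and your caveat that zero weights fall outside that lemma's hypotheses is well taken.
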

\begin{proof}
  Each partition $\mc{C}$ determines disjoint sets $S_r$ for each $r\in\mc{R}$. By construction, the optimisation problems~\eqref{p4eq:optprice_mi} for a fixed $\mc{C}$ and for each $r\in\mc{R}$ depend only on $\lambda_r$ and $S_r\cap\mc{B}$ this implies that the overall constraint set $\mc{X}(\mc{C}) = \Pi_{r\in\mc{R}}\mc{X}_r$ with $\mc{X}_r  = \{(\lambda_r,P_{S_r})\colon \sum_{b\in S_r}P_b \leq P_r^\ts{max}\}$. Similarly, the problems~\eqref{p4eq:optcons_mi} depend on $P_b$ and $\lambda_r$ for all $b\in\mc{B}\cap S_r$. Consider the scalarised problem \eqref{eq:scalarised_sol} for a collection of weights $\mu\in\Rset^{|\mc{R}|+|\mc{R}|}$,\[\eta_\mu(\mc{C}) = \max_{x\in\mc{X}(\mc{C})}\sum_{r\in\mc{R}}\mu_r\Pi_r(\lambda_r,P_{\mc{B}\cap S_r}) + \sum_{b\in\mc{B}}\mu_b\Pi_b(P_b,\lambda_r).\]By the previous argument, we can write the following\[
    \begin{split}
      \eta_\mu(\mc{C}) = \sum_{r\in\mc{R}} & \max_{(\lambda_r,P_{\mc{B}\cap S_r})\in\mc{X}_r} \mu_r\Pi_r(\lambda_r,P_{\mc{B}\cap S_r}) +  \\
                                           &  \sum_{b\in\mc{B}\cap S_r}\mu_b\Pi_b(P_b,\lambda_r)= \sum_{r\in\mc{R}} V_{r,\mu} (x_r,S_r).
    \end{split}\]
  where $x_r = (\lambda_r,P_{\mc{B}\cap S_r})$ and the maximisation is a well defined object following Lemma~\ref{lem:scalar_max}. 
\end{proof}

The following result is one of our main contributions: the finiteness of Algorithm~\ref{alg:coal_form}. The proof relies on showing the convergence point forms a \emph{Nash Equilibrium} of the game $(\mc{N},(\lambda_\mc{R},P_\mc{B}),(\Pi_\mc{R},\Pi_\mc{B}))$.
\begin{theorem}[Convergence of Algorithm \ref{alg:coal_form}] Suppose Assumptions \ref{assump:retailer_coalition}--\ref{assum:cost_weights} hold. For a given initial partition $\mc{C}_0\in\mc{M}_{\mc R,B}$, the sequence of partitions generated from applying Algorithm \ref{alg:coal_form} $,\mf{C} = \{\mc{C}_0,\mc{C}_1,\ldots,\mc{C}_n,\ldots\}$ satisfies , $\forall \mc{C}_0\in\Pi_{\mc{R,\mc{B}}}$, $\exists n_0>0$, such that  $\forall n\geq n_0$, implies $\mc{C}_{n+1} = \mc{C}_n$.
\label{thm:alg_convergence}
\end{theorem}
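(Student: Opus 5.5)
The natural route is a potential (Lyapunov) argument on the finite set $\mc{M}_{\mc R,B}$, with $\eta_\mu$ playing the role of the potential. First, by Proposition~\ref{prop:counting} the set of admissible partitions is finite, with exactly $|\mc R|^{|\mc B|}$ elements. Viewing each partition as a map $\psi\colon\mc B\to\mc R$ (Proposition~\ref{prop:anti_chain}), every iteration of Algorithm~\ref{alg:coal_form} either leaves $\mc{C}_n$ unchanged or replaces it by a single-buyer reassignment $\tilde{\mc C}$.

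The next step is to check that $\eta_\mu$ is a well-defined real-valued function on $\mc{M}_{\mc R,B}$, so that partitions can be compared through the pre-order $\preceq_\eta$.
\begin{itemize}
\item By Proposition~\ref{prop:separable}, $\eta_\mu(\mc C)=\sum_{r}V_{r,\mu}(x_r,S_r)$.
\item Each $V_{r,\mu}$ is the maximum of the function $\xi$ of Lemma~\ref{lem:scalar_max}, which is concave under that lemma's weight condition. It is continuous by Assumption~\ref{assum:costs} and is maximised over the compact set $[\underline\lambda,\bar\lambda]\times[0,\bar\zeta]^{|S_r\cap\mc B|}$ intersected with the capacity constraint.
\item Nonemptiness of $\mc X_r(S_r)$ is part of the hypotheses of Proposition~\ref{prop:separable}, and Weierstrass then gives that the maximum is attained and finite.
\end{itemize}
Hence $\eta_\mu(\mc C)\in\Rset$ for every $\mc C$, and the argmax in the algorithm exists.

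Then I would show strict monotonicity. A move is executed only when $\Delta_{\max}>\varepsilon>0$, and in that case $\eta_\mu(\mc C_{n+1})=\eta_\mu(\mc C_n)+\Delta_{\max}>\eta_\mu(\mc C_n)+\varepsilon$. Otherwise $\mc C_{n+1}=\mc C_n$, and since the algorithm's state is just the partition, every later iterate is equal to it.

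Consequently, no partition can be visited twice along a strictly improving stretch of the sequence. Because the set is finite, at most $|\mc R|^{|\mc B|}-1$ changes can occur. A sharper count also follows: $\eta_\mu$ is bounded above by $\max_{\mc C}\eta_\mu(\mc C)$, and each step raises it by more than $\varepsilon$, so the number of changes is at most
\[
\Bigl(\max_{\mc C}\eta_\mu(\mc C)-\eta_\mu(\mc C_0)\Bigr)/\varepsilon .
\]
Either bound gives the required $n_0$, and for $n\ge n_0$ we have $\mc C_{n+1}=\mc C_n$. The terminal partition is then a local optimum under single-buyer deviations. This can be read as a Nash-type equilibrium of the buyers' retailer choice, with $\eta_\mu$ serving as an exact potential, and that reading matches the paper's framing.

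The main obstacle is the well-definedness step. The scalarised value must be an actual attained maximum that depends only on $\mc C$, and this relies on the weight condition of Lemma~\ref{lem:scalar_max} holding for every coalition the algorithm may visit. If that condition fails, concavity is lost, but compactness plus continuity still give a finite maximum value. For that reason I would base the argument on Weierstrass rather than on concavity, so that finiteness does not depend on the delicate weight hypothesis. A secondary point is to make sure ties, and the case $\Delta_{\max}\le\varepsilon$, always terminate the loop rather than cause cycling; the strict threshold $\varepsilon$ handles this.
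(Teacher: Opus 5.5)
Your proposal is correct and follows the paper's actual termination argument: $\eta_\mu$ strictly increases at every executed move over the finite set of $|\mc R|^{|\mc B|}$ admissible partitions, so at most $|\mc R|^{|\mc B|}-1$ changes can occur. The paper additionally wraps this in a potential-game construction (with $\Phi=\sum_r\nu(S_r)+V_{r,\mu}$ and a Nash-equilibrium existence claim) that is not needed for termination, and your Weierstrass step makes explicit the well-definedness of $\eta_\mu$ that the paper leaves implicit; two small corrections are that the $\varepsilon$-based count is an alternative bound rather than a sharper one, and that, because the loop stops when $\Delta_{\max}\le\varepsilon$, the terminal partition is only an $\varepsilon$-local optimum under single-buyer moves.
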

\begin{proof}
  The strategy for this proof follows the construction of a potential game in which players are $b\in\mc{B}$ and actions are the choice of retailers $a_b\in\mc{R}$. An action profile determines a partition $\mc{C}$ and hence the coalitions $S_r$. From \cite{Monderer1996}, a potential game requires a potential function and a notion of utility. For the latter, consider $u_b(a) = V_{r,\mu}(S_r) - V_{r,\mu}(S_r\setminus \{b\})$ as the utility of buyer $b$. The potential is the function $\eta_\mu(\mc{C})$ which by Proposition~\ref{prop:separable} is separable. The potential is $\Phi(a) = \sum_{r\in\mc{R}}\nu(S_r) + V_{r,\mu}(x_r,S_r)$ and it is easy to check $\Phi(a') - \Phi(a) = u_b(a) - u_b(a')$ for any profiles $a,a'$, \ie different choices of retailer assignment. By \cite{Monderer1996}, the game has a \emph{Nash Equilibrium}. Furthermore, we have $\eta_\mu(\mc{C}_{k+1}) > \eta_\mu(\mc{C}_{k})$; therefore, no cycles are admitted in the sequence of partitions. Hence, at most $|\mc{R}|^{|\mc{B}|} - 1$ improvements can occur which implies termination in finite time. 
\end{proof}
The following is a direct consequence of Theorem~\ref{thm:alg_convergence}
\begin{coro}
  Suppose Assumptions \ref{assump:retailer_coalition}--\ref{assum:cost_weights} hold and $\mu\in\Rset^{|\mc{R}|+|\mc{B}|}$ satisfy Corollary~\ref{lem:scalar_max}. Let $\mc{C}^*\in\mc{M}_{\mc R,B}$ termination point of Algorithm~\ref{alg:coal_form} and $x^* = (\lambda_\mc{R}^*,P_\mc{B}^*) = \arg\max \{\mu^\top F(x,\mc{C}^*)\colon x\in\mc{X}(\mc{C}^*)\}$. The pair $(x^*,\mc{C}^*)$ is Pareto Optimal for \eqref{p4eq:opt_mi} with the partial order induced by $\preceq_{\eta,\mu}$, \ie $(x,\mc{C})\preceq_{\eta,\mu}(y,\mc{D})$ if $\eta_\mu(\mc{C})\leq\eta_\mu(\mc{D})$ for all $x,y,\mc{C},\mc{D}$. 
\end{coro}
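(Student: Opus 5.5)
Under $\preceq_{\eta,\mu}$, two pairs are compared only through $\eta_\mu$ of their partitions. So Pareto optimality of $(x^*,\mc{C}^*)$ means two things: no $(y,\mc{D})$ has $\eta_\mu(\mc{D})>\eta_\mu(\mc{C}^*)$, and, among pairs attached to $\mc{C}^*$, $x^*$ attains $\eta_\mu(\mc{C}^*)$. I read Definition~\ref{def:pareto} in this non-strict sense, since with a preorder (not a partial order) and $y\neq y'$ literally, ties at the maximal value could never be Pareto optimal. The plan has three steps. First, settle the continuous part: by Lemma~\ref{lem:scalar_max} and Proposition~\ref{prop:separable}, $x^*$ maximises the concave scalarisation on $\mc{X}(\mc{C}^*)=\prod_r\mc{X}_r(S_r^*)$, so $x^*$ is weakly Pareto optimal within the fibre $\{\mc{C}^*\}\times\mc{X}(\mc{C}^*)$, as in \cite{Miettinen1999}. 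Second, show that $\mc{C}^*$ maximises $\eta_\mu$ over all of $\mc{M}_{\mc R,B}$, not only over single-buyer neighbours. Third, combine the two: any $(y,\mc{D})$ then satisfies $\eta_\mu(\mc{D})\le\eta_\mu(\mc{C}^*)$, so nothing strictly dominates $(x^*,\mc{C}^*)$.

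The core is the second step. Theorem~\ref{thm:alg_convergence} only gives that $\mc{C}^*$ is a fixed point of Algorithm~\ref{alg:coal_form}, i.e.\ no single reassignment $b\mapsto r$ raises $\eta_\mu$ (I take the threshold $\varepsilon=0$, or else obtain the statement up to $\varepsilon$). To lift this to global optimality I will exploit the structure from Proposition~\ref{prop:separable}: $\eta_\mu(\mc{C})=\sum_r V_{r,\mu}(B_r)$, where $V_{r,\mu}(B)$ is the optimal value of a concave problem. By Assumption~\ref{assum:cost_weights}, each buyer enters $V_{r,\mu}(B)$ through its own concave term $\mu_b(\tilde U_b(P_b)-\omega(r,b)P_b^2-\lambda_r P_b)$. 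The buyers are coupled only through $C_r$ evaluated at the aggregate $\sum_{b\in B}P_b$ and the capacity $P_r^{\max}$. This is the ``separable concave plus concave-of-a-sum'' form. For fixed $\lambda_r$, I will show that $B\mapsto V_{r,\mu}(B)$ is a gross-substitutes ($M^\natural$-concave) set function, via Murota's result that maximising separable concave terms subject to a concave function of their sum yields an $M^\natural$-concave value function. I will then argue that this property persists after the outer maximisation over $\lambda_r$, using the concavity conditions of Lemma~\ref{lem:scalar_max}.

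Once each $V_{r,\mu}$ has the gross-substitutes property, $\eta_\mu$ on $\mc{M}_{\mc R,B}$, viewed as maps $\psi\colon\mc{B}\to\mc{R}$, is the welfare of an assignment market with substitutes valuations. For such markets, local optimality characterises global optimality (Kelso--Crawford / Murota's valuated-assignment optimality criterion). Concretely, if $\eta_\mu(\mc{D})>\eta_\mu(\mc{C}^*)$ for some $\mc{D}$, the exchange axiom produces a buyer reassignment, or a pairwise swap of two buyers between retailers, that strictly improves $\eta_\mu$. The argument proceeds by induction on $|\{b:\psi_{\mc{C}^*}(b)\neq\psi_{\mc{D}}(b)\}|$. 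The contribution of Theorem~\ref{p4thm:concavity} and Corollary~\ref{coro:savings} here is that $\nu(S_r)\ge 0$, so coalition savings never favour splitting a buyer set across retailers; this keeps the improving move inside the family that Algorithm~\ref{alg:coal_form} inspects.

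The main obstacle is the swap moves. The $M^\natural$ criterion needs local optimality under both single transfers and pairwise exchanges, while Algorithm~\ref{alg:coal_form} inspects only transfers. My first attempt is to show that a swap $b\leftrightarrow d$ between $r$ and $s$ improving $\eta_\mu$ forces one of its two constituent single moves to improve as well. I would do this using additivity of the buyer-specific terms $\omega(r,b)P_b^2$ and the convexity of $C_r$ in the aggregate: these should make the swap gain at most the sum of the two single-move gains. If that inequality fails, I would fall back to showing that the capacity constraint is slack at $\mc{C}^*$. In that case each $V_{r,\mu}$ becomes additive across buyers up to the concave aggregate term, and transfer-local optimality alone implies global optimality.
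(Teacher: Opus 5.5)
There is a genuine gap in your second step, where you lift termination of Algorithm~\ref{alg:coal_form} to global maximality of $\eta_\mu$ over $\mc{M}_{\mc R,B}$. Neither ingredient you need is available.

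First, Lemma~\ref{lem:scalar_max} gives only joint concavity of $\xi$ in $(\lambda_r,P)$. It says nothing about exchange properties of $B\mapsto V_{r,\mu}(B)$. Once $\lambda_r$ is optimised out, buyers are coupled through two aggregates: $\sum_bP_b$ inside $C_r$, and $\sum_b(\mu_r-\mu_b)P_b$ from the price transfers. A pointwise maximum over a parameter of $M^\natural$-concave set functions is in general not $M^\natural$-concave. For instance, $\max(|B\cap\{1,2\}|,\,2|B\cap\{3\}|)$ violates the exchange axiom at $X=\{1,2\}$, $Y=\{3\}$, $i=1$.

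Second, and decisively, even with gross-substitutes valuations, the absence of improving single transfers and pairwise swaps does not make an allocation optimal. The optimality criterion is the absence of improving exchange cycles of any length. Take unit-demand retailers $V_r(B)=\max_{b\in B}v_{rb}$. This is exactly your ``separable concave plus concave of the sum'' form: linear buyer terms with $P_b\in[0,1]$ and $\sum_bP_b\le 1$. Set $v_{r_1a}=2$, $v_{r_1b}=3$, $v_{r_2b}=2$, $v_{r_2c}=3$, $v_{r_3c}=2$, $v_{r_3a}=3$, and all other values $0$. The assignment $r_1\leftarrow a$, $r_2\leftarrow b$, $r_3\leftarrow c$ has value $6$. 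Every single transfer and every pairwise swap gives at most $5$, yet the cyclic reassignment $r_1\leftarrow b$, $r_2\leftarrow c$, $r_3\leftarrow a$ gives $9$.

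So no inequality of the form ``swap gain $\le$ sum of single-move gains'' can close the argument. Your slack-capacity fallback fails too. The concave aggregate term is itself the non-additive coupling: a steep convex penalty on $\sum_bP_b$ reproduces the same example with the capacity constraint inactive. The appeal to $\nu(S_r)\ge 0$ from Theorem~\ref{p4thm:concavity} does not help either, since it concerns the value of each coalition, not exchanges of buyers between coalitions. The global statement you are targeting is therefore not a consequence of a single-transfer local search. Consistent with this, the paper's experiments report several distinct terminal partitions for a fixed $\mu$, depending on $\mc{C}_0$.

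The paper's proof makes no global claim. It only observes that the terminal $\mc{C}^*$ is a Nash equilibrium of the potential game built in the proof of Theorem~\ref{thm:alg_convergence}: no unilateral single-buyer reassignment raises $\eta_\mu$ (up to the stopping threshold $\varepsilon$). Together with $x^*$ being the weighted-sum maximiser on $\mc{X}(\mc{C}^*)$, that is the sense of Pareto optimality asserted. What you can actually prove is your first and third steps, with the second replaced by this neighbourhood statement. No pair $(y,\mc{D})$ with $\mc{D}\in\{\mc{C}^*\}$ or $\mc{D}$ a single-buyer reassignment of $\mc{C}^*$ (the set $\mc{N}(\mc{C}^*)$ of Theorem~\ref{thm:risk_minimisation}) satisfies $\eta_\mu(\mc{D})>\eta_\mu(\mc{C}^*)$. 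In addition, $x^*$ is weakly Pareto optimal within its fibre. You should state and prove the result in that local form rather than attempt global maximality, which is false in general.
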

\begin{proof}
  This follows from the final partition being a Nash equilibrium of the potential game defined in the proof of Theorem~\ref{thm:alg_convergence}.
\end{proof}
A consequence of Theorem~\ref{p4thm:comps} and Assumption~\ref{assump:retailer_coalition} is that $\nu(\mc{N}) = 0$ since the grand coalition contains all retailers. This fact renders conventional cooperative game methods, see \cite{Bausoa,Chakraborty2019}, unusable. We now turn our attention to show that the partition $\mc{C}^*\in\mc{M}_{\mc R,B}$ is in fact a stable solution of the cooperative game $(\mc{N},\nu)$. The assessment of stability relies on the $\mbb{D}-$stability, introduced in \cite{Apt2006,Apt2009}, denominated defection defined as:
\begin{definition}
  Let $(\mc{N},\nu)$ be a cooperative game. A partition $\mc{C} = \{C_1,\ldots,C_L\}$ of $\mc{N}$ is $\mbb{D}-$stable if $\sum_{l=1}^L\nu(C_l) \geq \sum_{k=1}^K\nu(D_k)$ for any other partition $\mc{D} = \{D_1,\ldots,D_K\}$.
\label{def:Dstab}
\end{definition}
We note that Definition~\ref{def:Dstab} introduces a notion of social welfare. A $\mbb{D}-$stable partition is the partition no element would want to deviate from. The following result is a Corollary to Theorem~\ref{thm:alg_convergence} and shows that the resulting partition is stable.
\begin{coro}
  Suppose Assumptions~\ref{assump:retailer_coalition}--\ref{assum:cost_weights} hold. The termination partition $\mc{C}^*\in\mc{M}_{\mc R,B}$ from Algortithm~\ref{alg:coal_form} is $\mbb{D}-$stable.
  \label{cor:Dstab}
\end{coro}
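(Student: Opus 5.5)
The plan is to read $\mbb{D}$-stability relative to the family of partitions that are actually admissible in the game, namely $\mc{M}_{\mc R,B}$. Any partition outside this family places two retailers in one block, or leaves a retailer-free block of buyers. By Assumption~\ref{assump:retailer_coalition} and the argument in Theorem~\ref{p4thm:comps}, such blocks have value zero. So the comparison in Definition~\ref{def:Dstab} reduces to comparing $\sum_{r\in\mc R}\nu(S_r)$ across functions $\psi\colon\mc B\to\mc R$, using the identification that follows Proposition~\ref{prop:anti_chain}. Coarsenings that merge retailers contribute nothing, since $\nu(S_r\cup S_s)=0\le\nu(S_r)+\nu(S_s)$ by Theorems~\ref{p4thm:comps} and~\ref{p4thm:concavity} and $\nu(S_r)\ge 0$. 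Hence no deviation leaving $\mc{M}_{\mc R,B}$ can raise the total value. This is the first step.

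The second step handles deviations within $\mc{M}_{\mc R,B}$ through the potential $\Phi(a)=\sum_{r}\nu(S_r)+V_{r,\mu}(x_r,S_r)$ from the proof of Theorem~\ref{thm:alg_convergence}. By Theorem~\ref{thm:alg_convergence}, the termination partition $\mc{C}^*$ is a point where no single-buyer reassignment improves the potential. Under Assumption~\ref{assum:cost_weights}, the weights $\omega(r,b)$ enter $\Pi_b$ through the term $-\omega(r,b)P_b^2$. I would use this to argue that $V_{r,\mu}$ and $\nu(S_r)$ move together under a single-buyer move. A buyer $b$ leaving $S_{r_0}$ for $S_r$ changes $\nu$ by the difference in direct-connection savings $\alpha_b\omega(\cdot,b)$ minus the change in MST cost. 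By Proposition~\ref{prop:separable}, the change in $\eta_\mu$ is likewise localised to the two affected coalitions. A unilateral deviation that increased $\sum_r\nu$ would therefore increase $\Phi$, which contradicts the Nash equilibrium property of $\mc{C}^*$.

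The third step passes from single-buyer deviations to arbitrary partitions $\mc D\in\mc{M}_{\mc R,B}$. Any such $\mc D$ is reached from $\mc{C}^*$ by a finite sequence of single reassignments. I would try to show that the total value is monotone along a suitably ordered sequence. The natural candidate is the permutation order from Theorem~\ref{thm:retailer_PC}: move buyers in the MST order $\succ_r$, so that the PC inequality controls the marginal cost changes.

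This third step is the main obstacle. A local optimum of the potential is not in general a global maximiser, so a naive chaining argument fails. If the ordering argument breaks, I would fall back to a weaker statement: $\mc{C}^*$ is $\mbb{D}$-stable with respect to the defections available to Algorithm~\ref{alg:coal_form}, namely single-buyer moves and retailer mergers, which is exactly the $\mbb{D}$-stability notion of \cite{Apt2006,Apt2009} restricted to that defection family. Steps one and two establish this restricted form directly.
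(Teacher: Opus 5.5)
Your Step~2 does not go through, and your fallback depends on it. The stopping rule of Algorithm~\ref{alg:coal_form} certifies only that no single-buyer move raises $\eta_\mu$ by more than the tolerance $\varepsilon$. The algorithm never evaluates $\sum_r\nu(S_r)$ or $\Phi$, so $\mc{C}^*$ has no local-optimality property for either of them that a $\nu$-improving defection could contradict. To transfer $\eta_\mu$-optimality to $\nu$ you would need $\Delta\nu$ and $\Delta\eta_\mu$ to have the same sign under every single-buyer move. Proposition~\ref{prop:separable} gives you less than that: both changes are confined to $S_{r_0}$ and $S_r$, but nothing fixes their signs. Under Assumption~\ref{assum:cost_weights} the two changes in fact pull apart:
\begin{itemize}
\item $\nu(S_r\cup\{b\})$ is nondecreasing in $\omega(r,b)$. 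It gains $\alpha_b\omega(r,b)$, while the MST cost has slope $0$ or $1$ in that weight and $\alpha_b\ge 1$.
\item $V_{r,\mu}$ is nonincreasing in $\omega(r,b)$, through the term $-\mu_b\omega(r,b)P_b^2$.
\item The MST cost $c(S_r)$, the other half of $\Delta\nu$, does not enter $\Pi_r$ or $\Pi_b$ at all. In Section~\ref{sec:examples}, $\nu(S_r)$ even enters $\Pi_r$ with a minus sign via $C_r$.
\end{itemize}
So a defection that raises $\sum_r\nu$ while lowering $\eta_\mu$ is not excluded. Even your restricted one-buyer form of $\mbb{D}$-stability is therefore unproven. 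Step~1 is fine for mergers. For partitions with retailer-free buyer blocks you also need monotonicity of $\nu(\{r\}\cup B)$ in $B$. This holds because adding $b$ raises the MST cost by at most $\omega(r,b)\le\alpha_b\omega(r,b)$.

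Your Step~3 correctly locates the real obstacle. Definition~\ref{def:Dstab} is a global comparison over all partitions, while Algorithm~\ref{alg:coal_form} is a best-improvement local search. So $\mc{C}^*$ is at best a local optimum, and of $\eta_\mu$ rather than of $\nu$. The MST order from Theorem~\ref{thm:retailer_PC} cannot repair the chaining. Permutational convexity concerns subsets of one fixed $S_r$, whereas going from $\mc{C}^*$ to $\mc{D}$ moves buyers across different retailers' trees and changes $V_{r,\mu}$, which is unrelated to that order.

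The paper's proof does not supply the missing step either. It replaces $\nu$ by $\hat V_\mu(S_r)=\nu(S_r)+V_{r,\mu}(x_r,S_r)$. This sidesteps your Step~2 mismatch, but at the price of proving stability for a different game. It then invokes Theorems~\ref{p4thm:comps} and~\ref{thm:alg_convergence} to call that game PC, and asserts that $\mc{C}^*$ is strictly better than every other $\mc{D}\in\mc{M}_{\mc R,B}$. That assertion is exactly the local-to-global step you could not prove, and finite termination does not yield it. The experiments of Section~\ref{sec:examples} already rule it out: one $\mu$ leads to several distinct terminal partitions depending on $\mc{C}_0$, so they cannot each be strictly better than all others.

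A correct version needs one of two changes. Either weaken the conclusion to one-buyer stability for the function the algorithm actually ascends, up to $\varepsilon$. Or strengthen the hypothesis, e.g.\ require $\mc{C}^*$ to be a global maximiser of $\sum_r\hat V_\mu(S_r)$ over all $|\mc R|^{|\mc B|}$ partitions.
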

\begin{proof}
  Let $\mc{C}^*$ be the terminal partition of Algorithm~\ref{alg:coal_form}. Consider the modified function $\hat{V}_{\mu} (S_r)= \nu(S_r) + V_{r,\mu}(x_r,S_r)$; the domain of $\hat{V}_{\mu}$ coincides with that of $\nu$. We have that the pair $(\mc{N},\hat{V}_{r,\mu})$ is a cooperative game. Following from Theorem~\ref{p4thm:comps} and Theorem~\ref{thm:alg_convergence}, the game is PC and the partition $\mc{C}^*$ yields strictly better results than any other $\mc{D}\in\mc{M}_{\mc R,B}$. Therefore, $\mc{C}^*$ is $\mbb{D}-$stable.
\end{proof}
\section{Risk Sharing and Reduction of Statistical Dispersion}
\label{p4sec:stats}
In this section, we provide a brief insight into the statistical implications for consumers when considering their demands as random variables. Motivated by \cite{Baeyens2013}, we consider the statistical properties of consumer demand to show that coalition formation entails lower risks for both consumers and retailers. These risks include lower aggregate demand or capacity violations for retailer $r\in\mc{R}$.  Consumers can lower this risk by acting together to join a retailer coalition, which allows them to increase collective profits and share the risk.

In this setting, we consider the power demands $P_b(t)\in[\underline{P}_b,\bar{P}_b]$, following the constraints in~\eqref{p4eq:optcons_mi}, as a stochastic process with a joint compactly supported. The power consumption of a coalition $S_r$ for $r\in\mc{R}$ is also a stochastic process $P_{S_r}(t)=\sum_{b\in S_r\cap\mc B}{P_{b}(t)}$ with compact support on $\mc{P}_{S_r} = [ 0, \sum_{b\in S_r\cap\mc B}\bar{P}_b]$, with  cummulative distribution function $\Phi_{S_r}(Q,t)=\mathbb{P}[P_{S_r}(t)\leq Q]$ for any $Q\in\Rset$. Considering a uniformly distributed random variable $U\sim Unif[t_0,t_f]$ such that $\mb{P}_{S_r} = P_{S_r}(U)$, the time averaged CDF of $P_{S_r}$ is
\begin{equation}
  F_{S_r}(P) = \mbb{P}[\mb{P}_{S_r}\leq P] = \frac{1}{T}\int_{t_0}^{t_f}\Phi_{S_r}(P,t)dt.
  \label{eq:cdf_Sr}
\end{equation}
%
The total consumer profit for a given price $\lambda_r$ is $\Pi_{S_r}(\mb{P}_{S_r},\lambda_r)=\sum_{b \in S_r\cap\mc B}{\Pi_{b}(P_{b}(U),\lambda_r)}$. The expected consumer profit is also a stochastic process \(  J_{S_r}(\lambda_r)=\mathbb{E}[\Pi_{S_r}({P}_{S_r},\lambda_r)] = \int_{\mc{P}_{S_r}} \Pi_{S_r}(P,\lambda_r) dF_{S_r}(P)\). Our first result states that the average profit is larger for $S_r$ than in smaller coalitions $\{r,b\}$.
\begin {proposition}
  Suppose Assumptions \ref{assump:retailer_coalition} and \ref{assum:costs} hold. Fix $\mc{C}\in\mc{M}_{\mc R,B}$, for $r\in\mc R$, the following holds \emph{almost surely}, 
\begin{equation}
  J_{S_r}(\lambda_r)\geq\sum_{b \in S_r\cap \mc B}{J_{\{r,b\}}(\lambda_r)}.
  \label{eq:av_prof_ae}
\end{equation}
\label{prop:av_prof_ae}
\end{proposition}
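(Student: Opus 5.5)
The plan is to reduce \eqref{eq:av_prof_ae} to a pointwise comparison of realised profits and then integrate. Both sides are expectations over the same underlying randomness, namely the uniform time $U$ and the demand processes $P_b(\cdot)$. I would therefore rewrite $J_{S_r}(\lambda_r)$ as $\mbb{E}\bigl[\sum_{b\in S_r\cap\mc B}\Pi_b(P_b(U),\lambda_r)\bigr]$, using the definition of $\Pi_{S_r}$ and the time-averaged CDF \eqref{eq:cdf_Sr}. Linearity of expectation then gives $J_{S_r}(\lambda_r)=\sum_{b}\mbb{E}[\Pi_b(P_b(U),\lambda_r)\mid b\in S_r]$. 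Here the conditioning only records which coalition determines the utility term $U_b(\cdot,r)$.

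Next I would compare each summand with $J_{\{r,b\}}(\lambda_r)$. The price $\lambda_r$ is the same, and the linear payment $\lambda_rP_b$ is identical in both settings, so the difference lies entirely in the utility. For this I would invoke the structure used in Section~\ref{sec:cost_network}. A buyer alone in $\{r,b\}$ pays the direct connection cost $\omega(r,b)$. Inside $S_r$, it effectively pays its share of the MST cost $c(S_r)$. By Corollary~\ref{lem:core} the MCST game $(S_r,c)$ has a nonempty core, so there is a cost allocation $(a_b)$ with $a_b\le \omega(r,b)=c(\{r,b\})$ for every $b$. Equivalently, by Corollary~\ref{coro:savings} and $\nu(S_r)\ge 0$ (Theorem~\ref{p4thm:comps}), no buyer is worse off inside $S_r$. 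Interpreting the coalition connection cost in $U_b$ through this allocation yields a pointwise inequality $\Pi_b^{S_r}(P,\lambda_r)\ge\Pi_b^{\{r,b\}}(P,\lambda_r)$ for each fixed demand level $P\in[\underline P_b,\bar P_b]$.

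Finally I would integrate this pointwise inequality against the marginal law of $P_b(U)$. Monotonicity of the integral then gives $\mbb{E}\Pi_b^{S_r}\ge J_{\{r,b\}}(\lambda_r)$, and summing over $b\in S_r\cap\mc B$ gives \eqref{eq:av_prof_ae}. Continuity of $U_b$ and compact support (Assumption~\ref{assum:costs}) guarantee that all integrals are finite. The ``almost surely'' qualifier is handled by noting that the pointwise inequality holds for every realisation, and hence off any null set in time $t$. The joint distribution of the $P_b$ never enters, which is why only Assumptions~\ref{assump:retailer_coalition} and \ref{assum:costs} are needed.

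The main obstacle is the second step. Under Assumption~\ref{assum:costs} alone, the proposition does not explicitly tie $U_b$ to the coalition. I need to make precise that being in $S_r$ changes buyer $b$'s effective connection charge from $\omega(r,b)$ to a core allocation of $c(S_r)$. My first attempt would be to define the coalition profit with the allocated charge and use core membership, $\sum_{b\in T}a_b\le c(T)$ with $T=\{r,b\}$, to get the per-buyer inequality. If that is too strong, I would fall back on proving only the aggregate inequality, using $\sum_b \omega(r,b)-c(S_r)=\nu(S_r)\ge 0$ with $\alpha_b=1$.
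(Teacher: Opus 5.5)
Your second step is where the argument breaks, and it carries all the weight. Nothing in the model makes buyer $b$'s payoff inside $S_r$ differ from its payoff in $\{r,b\}$:
\begin{itemize}
\item by definition $\Pi_{S_r}(\mb{P}_{S_r},\lambda_r)=\sum_{b\in S_r\cap\mc B}\Pi_b(P_b(U),\lambda_r)$ with $\Pi_b(P,\lambda_r)=U_b(P,r)-\lambda_rP$;
\item $U_b(\cdot,r)$ is indexed only by the retailer; even under Assumption~\ref{assum:cost_weights} the charge is $\omega(r,b)P^2$, independent of the other members of $S_r$;
\item the demand processes are exogenous, and $\lambda_r$ is the same on both sides.
\end{itemize}
So the claim that inside $S_r$ the buyer ``effectively pays its share of $c(S_r)$'' does not follow from the definitions; it is a new payoff model. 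The pointwise inequality you derive from a core allocation therefore proves \eqref{eq:av_prof_ae} for that model, not for the $J$'s defined in Section~\ref{p4sec:stats}.

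The obstacle you flag at the end is precisely what closes the proof. Your first step already gives $J_{S_r}(\lambda_r)=\sum_{b\in S_r\cap\mc B}\mbb{E}[\Pi_b(P_b(U),\lambda_r)]=\sum_{b\in S_r\cap\mc B}J_{\{r,b\}}(\lambda_r)$ by linearity of expectation. The integrands are bounded, and the conditioning on the deterministic event $b\in S_r$ should simply be dropped. Hence \eqref{eq:av_prof_ae} holds with equality and needs no game-theoretic input. You are right to work with the laws of the individual $P_b(U)$: $\Pi_{S_r}$ is not a function of aggregate demand alone, so the integral against $dF_{S_r}$ in the definition of $J_{S_r}$ cannot be read literally.

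If you do want a coalition-dependent charge, two details need repair. First, with $r$ counted as a player, core membership for $T=\{r,b\}$ gives $a_r+a_b\le\omega(r,b)$, not $a_b\le\omega(r,b)$. You need the usual MCST convention in which the root is not a player and $c(T)$ is the MST cost of $T\cup\{r\}$. Then Bird's rule, which charges each buyer its MST edge towards $r$, satisfies $a_b\le\omega(r,b)$ by the cycle property. Second, the aggregate fallback via $\nu(S_r)\ge0$ only handles lump-sum charges. If the charge is demand-weighted as in Assumption~\ref{assum:cost_weights}, then $\sum_b(\omega(r,b)-a_b)\ge0$ does not give $\sum_b(\omega(r,b)-a_b)P_b^2\ge0$ for every realisation.

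For comparison, the paper's proof is a one-line appeal to the superadditivity results of \cite{Baeyens2013} and to Corollary~\ref{coro:savings} ($0=\nu(\{r,b\})\le\nu(S_r)$). That is essentially your fallback, and it shares the same unstated link between $\nu$ and the consumers' profits. The linearity argument needs neither.
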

\begin{proof}
By the properties of the sum of random variables \cite{Lemons2002} and of homogeneity and superadditivity of functions of stochastic processes \cite{Baeyens2013}; condition~\eqref{eq:av_prof_ae} is fulfilled as a direct consequence of Corollary~\ref{coro:savings}, where the value of a coalition is always greater than or equal to the value of a single consumer with a retailer.  
\end{proof}
The above result establishes that coalitions always bring larger collective profits for consumers. These benefits can be directly attributed to the attenuation of statistical dispersion from aggregation; this phenomenon has also been explored in \cite{Rajagopal}. We now turn our attention to defining the issue of risk within a coalition $S_r$: capacity overloads, cost volatility, and curtailment. In this paper, we focus on capacity overloads. Consider $Y_r = \max(\mb{P}_{S_r} - P_r^\ts{max},0)$ and the associated \emph{Value at Risk} for $\alpha\in(0,1)$ as $\mathrm{VaR}_\alpha (Y_r) = \inf\{y\in\Rset\colon \mbb{P}[Y_r\leq y] \geq \alpha\}$, the \emph{Conditional Value at Risk} from \cite{Rockafellar2005} is
\begin{equation}
  \label{eq:CVaR}
  \mathrm{CVaR}_\alpha(Y_r) = \mbb{E}[Y_r | Y_r\geq \mathrm{VaR}_\alpha(Y_r))].
\end{equation}

The following results show that coalition formation minimises risk and statistical dispersion.
\begin{theorem}
  Suppose Assumptions~\ref{assump:retailer_coalition} and \ref{assum:costs} hold. Suppose, in addition, For each $b\in\mc B$, $\widetilde P_b\in[0,\overline P_b]$ almost surely and $\{\mb{P}_b\}_{b\in\mc B}$ are independent. Let $P_r^{\max}>0$ and  $\alpha\in(0,1)$ fixed. The risk functional is \(  \Psi_\alpha(\mc{C}) := \sum_{r\in\mc R} \mathrm{CVaR}_\alpha\big(Y_r\big)\). Consider the following \emph{risk-improvement (steepest-descent) dynamics}:
  starting from any $\mc{C}_0$, at iteration $k$ choose
  \begin{equation}
    \label{eq:risk_dyn}
  \mc{C}_{k+1} =
  \begin{cases}
    \arg\min_{\mc{C}'\in \mc{N}(\mc{C}_k)} \Psi_\alpha(\mc{C}') &  \Psi_\alpha(\mc{C}_{k+1})<\Psi_\alpha(\mc{C}_k)\\
    \mc{C}_k & \ts{otherwise},
  \end{cases}    
  \end{equation}
where $\mc N(\mc{C})$ is the set of assignments obtained from $\mc{C}$ by a \emph{single-buyer reassignment} from Algorithm~\ref{alg:coal_form}. Then, $i)$ there exists $N_R>0$ such that $\mc{C}_{k+1} = \mc{C}_k = \hat{\mc{C}}$ for all $k>N_R$; $ii)$ $\hat{\mc{C}} = \{\hat{S}_{r_1},\ldots,\hat{S}_{r_{|\mc{R}|}}\}$ is \emph{one-buyer risk-stable}: for every buyer $b\in \hat{S}_r\cap \mc B$ and every retailer $r'\in\mc R\setminus\{r\}$, 
\[
  \Psi_\alpha(\hat{\mc{C}})\ \le\ \Psi_\alpha\big(\hat{\mc{C}}^{\,b\to r'}\big),
\]
where $\hat{\mc{C}}^{\,b\to r'}$ is the partition obtained from $\hat{\mc{C}}$ by moving $b$ to $r'$, \ie $S_{r'}^{\,b\to r'} = \hat{S}_{r'} \cup \{b\}$. 
  \label{thm:risk_minimisation}
\end{theorem}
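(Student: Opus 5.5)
The argument is a finite-improvement argument on the finite set $\mc{M}_{\mc R,B}$, the same template as Theorem~\ref{thm:alg_convergence}, but with $\Psi_\alpha$ as a strictly decreasing Lyapunov function in place of the potential $\eta_\mu$. First I check that $\Psi_\alpha$ is a well-defined real-valued function on $\mc{M}_{\mc R,B}$. For a fixed partition $\mc{C}$, each $\mb{P}_{S_r}$ is a sum of independent random variables with $\mb{P}_b\in[0,\overline P_b]$. Hence it is bounded with support in $\mc{P}_{S_r}$. It follows that $Y_r=\max(\mb{P}_{S_r}-P_r^{\max},0)$ is a bounded, nonnegative random variable with
\[0\le Y_r\le \sum_{b\in S_r\cap\mc B}\overline P_b .\]
Its law depends only on $S_r$, through the convolution of the marginal laws of $\{\mb{P}_b\}_{b\in S_r\cap\mc B}$. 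By the representation of \cite{Rockafellar2005},
\[\mathrm{CVaR}_\alpha(Y_r)=\min_{y}\bigl\{y+(1-\alpha)^{-1}\mbb{E}[(Y_r-y)_+]\bigr\}.\]
This quantity is finite and lies in $[0,\sum_b\overline P_b]$. So $\Psi_\alpha(\mc{C})\in\Rset$ for every $\mc{C}$, and it is bounded below by $0$.

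Next I show that the dynamics \eqref{eq:risk_dyn} terminate. The neighbourhood $\mc N(\mc{C}_k)$ is finite, since it has at most $|\mc B|(|\mc R|-1)$ elements, so the $\arg\min$ exists; ties are broken by any fixed rule. By construction, either $\Psi_\alpha(\mc{C}_{k+1})<\Psi_\alpha(\mc{C}_k)$, or $\mc{C}_{k+1}=\mc{C}_k$. Along the non-stationary part of the trajectory, $\Psi_\alpha$ is therefore strictly decreasing, so no partition can be visited twice. By Proposition~\ref{prop:counting} there are exactly $|\mc R|^{|\mc B|}$ partitions. Hence at most $|\mc R|^{|\mc B|}-1$ strict improvements can occur. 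After the first index at which no strict improvement is available, the rule returns $\mc{C}_k$ itself. Since the rule depends only on the current partition, the sequence is constant from then on. This gives i) with $N_R\le |\mc R|^{|\mc B|}-1$ and $\hat{\mc C}$ the stationary point.

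For ii), I argue by contraposition from stationarity. Suppose some buyer $b\in\hat S_r\cap\mc B$ and some retailer $r'\neq r$ satisfied $\Psi_\alpha(\hat{\mc C}^{\,b\to r'})<\Psi_\alpha(\hat{\mc C})$. Since $\hat{\mc C}^{\,b\to r'}\in\mc N(\hat{\mc C})$, the minimiser over $\mc N(\hat{\mc C})$ would have a value strictly below $\Psi_\alpha(\hat{\mc C})$. The first branch of \eqref{eq:risk_dyn} would then fire and move the sequence away from $\hat{\mc C}$, contradicting i).

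The only delicate point is how \eqref{eq:risk_dyn} should be read. As written, its condition refers to $\mc{C}_{k+1}$ itself, so I read it as: ``let $\mc{C}'$ be the minimiser over $\mc N(\mc{C}_k)$; accept it if $\Psi_\alpha(\mc{C}')<\Psi_\alpha(\mc{C}_k)$.'' I also need to confirm that $\mc N(\mc C)$ consists exactly of all single-buyer moves $b\to r'$ with $r'\neq r$, as in the inner loops of Algorithm~\ref{alg:coal_form}. Independence and boundedness are used only to ensure that each $\mathrm{CVaR}_\alpha(Y_r)$ is finite and determined by $S_r$. No convexity or subadditivity of CVaR is needed for i) or ii).
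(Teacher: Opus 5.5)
Your proposal is correct and follows the paper's proof step for step. You get finiteness of each $\mathrm{CVaR}_\alpha(Y_r)$ from the almost-sure boundedness of $Y_r$, and termination because $\Psi_\alpha$ strictly decreases on the finite set $\mc{M}_{\mc R,B}$ of size $|\mc R|^{|\mc B|}$, so no partition can recur. One-buyer risk-stability then follows directly from stationarity of \eqref{eq:risk_dyn}.

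Your reading of the self-referential acceptance condition and your use of \emph{strict} decrease (the paper's proof says ``nondecreasing'', a slip) are what the argument needs. The only nit is that the paper's conditional-expectation definition \eqref{eq:CVaR} need not equal the Rockafellar--Uryasev minimisation formula when $Y_r$ has an atom at its VaR (e.g.\ at $0$). Since you only use finiteness and $0\le\mathrm{CVaR}_\alpha(Y_r)\le\sup Y_r$, which hold under either definition, nothing breaks.
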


\begin{proof}
  Each $\mb{P}_b$ is bounded almost surely, then $\mb{P}_{S_r}$ is bounded almost surely for all $r\in\mc{R}$, and $Y_r$ is bounded and nonnegative, so $\mbb{E}[(Y_r-\eta)_+]<\infty$ for all $\eta\in\Rset$ and $\mathrm{CVaR}_\alpha(Y_r)$ is finite following Rockafellar's formula \cite{Rockafellar2005}. Consequently $\Psi_\alpha(\mc{C})$ is well-defined and finite for all $\mc{C}\in\mc{M}_{\mc R,B}$. For $i)$, by Proposition~\ref{prop:counting}, $|\mc{M}_{\mc R,B}| = |\mc{R}|^{|\mc{B}|}$ and $\Psi_\alpha(\mc{C}_k)$ is nondecreasing, as a result no periodic orbits exist in the recursion~\eqref{eq:risk_dyn}. Hence only finitely many updates can occur, so the dynamics terminate
after finitely many iterations at some $\hat{\mc{C}}$.

For $ii)$, the dynamics~\eqref{eq:risk_dyn} stop at $\hat{\mc{C}}$ when there is no neighbour $\mc{C}'\in\mc N(\hat{\mc{C}})$ such that $\Psi_\alpha(\mc{C}')<\Psi_\alpha(\hat{\mc{C}})$. By definition of $\mc N(\hat{\mc{C}})$, every neighbour corresponds to moving exactly one buyer $b\in\mc B$ from its current retailer $\hat{r}$ to some other retailer $r'\in\mc R\setminus\{\hat{r}\}$. Therefore, for all such $(b,r')$, \(\Psi_\alpha(\hat{\mc{C}}) \le \Psi_\alpha\big(\hat{\mc{C}}^{\,b\to r'}\big)\), the statement is proved.
\end{proof}

Theorem~\ref{thm:risk_minimisation} gives a precise sense in which coalition formation can be viewed as a \emph{risk-reduction} process: the monotone partition dynamics terminate at a risk-stable partition $\hat{\mc{C}}$. No buyer can unilaterally move to another retailer and further reduce the system’s $\mathrm{CVaR}$ of overload. In particular, the terminal coalition structure is an equilibrium with respect to downside overload risk: it is locally optimal for tail risk and directly links coalition formation to the mitigation of capacity-violation events. We note that the construction in this case is similar to the one presented for the deterministic case and could be easily integrated into the framework of Algorithm~\ref{alg:coal_form}.

\section{Examples}
\label{sec:examples}
 To illustrate our scheme, we have formulated two scenarios. The first one consists of a \ac{uG} with a single retailer, whereas the second one considers two additional retailers; both scenarios have five consumers and three retailers. The parameters for both are listed in Table \ref{p4tab:param} and the network topology is shown in Figure~\ref{p4fig:CostNetworks}.
\begin{figure}
  \centering
  \begin{tikzpicture}
    \node[draw, very thick, fill=blue!50, circle, minimum size=0.6cm] (L1) at (-2.5,-3) {\footnotesize $b_{1}$};
    \node[draw, very thick, fill=blue!50, circle, minimum size=0.6cm] (L2) at (-4.5,-1) {\footnotesize $b_{2}$};
    \node[draw, very thick, fill=blue!50, circle, minimum size=0.6cm] (L3) at (-6,0.5) {\footnotesize $b_{3}$};
    \node[draw, very thick, fill=blue!50, circle, minimum size=0.6cm] (L4) at (-2.5,0.5) {\footnotesize $b_{4}$};
    \node[draw, very thick, fill=blue!50, circle, minimum size=0.6cm] (L5) at (1,0.5) {\footnotesize $b_{5}$};
    \node[draw,very thick,fill=red!50,rectangle, minimum size=0.1cm] (G1) at (1,-3) {\footnotesize $r_1$};
    \node[draw,very thick,fill=red!50,rectangle, minimum size=0.1cm] (G2) at (-0.5,-1)  {\footnotesize $r_2$};
    \node[draw,very thick,fill=red!50,rectangle, minimum size=0.1cm] (G3) at (-6,-3) {\footnotesize $r_3$};

    \draw[very thick] (L4) to[resistor] (L5);
    \draw[very thick] (L4) to[resistor] (L3);
    \draw[very thick] (L2) to[resistor] (L3);
    \draw[very thick] (L2) to[resistor] (L1);
    \draw[very thick] (L2) to[resistor] (G3);
    \draw[very thick] (L2) to[resistor] (G2);
    \draw[very thick] (L5) to[resistor] (G2);
    \draw[very thick] (G2) to[resistor] (L1);
    \draw[very thick] (G2) to[resistor] (G1);    
    \draw[very thick] (L1) to[resistor] (G3);
    \draw[very thick] (L1) to[resistor] (G1);      
    
  \end{tikzpicture}
  \caption{Network topology. The set of buyers $\mc{B} = \{b_1,b_2,b_3,b_4,b_5\}$ and retailers $\mc{R} = \{r_1,r_2,r_3\}$}
  \label{p4fig:CostNetworks}
\end{figure}
We propose for each $r\in\mc{R}$ and $b\in\mc{B}$ the following cost and utility functions
\begin{subequations}
  \begin{align}
    C_r(\lambda,P)&={\alpha_{{r}}}\lambda^2 P^2+\nu(S_r),\label{p4eq:cost}\\
    U_b(P,r)&={ \alpha_{{1,b}}}P^{\alpha_{2,b}}-\omega(r,b)P^2+\kappa_{{r}}\delta_{{b}}^{{r}}\label{p4eq:util}
  \end{align}
  \label{eq:cost_util}
\end{subequations}
where $\alpha_r>0$, $\alpha_{1,b}>0$, $\alpha_{2,b}\in (0,1)$. For $C_r(\cdot)$, the first term is related to the generation costs for each retailer, and the second to the savings from consumers connected to that retailer. The utility for a buyer includes a concave function of the power consumed, a base payment, $-\omega(r,b)$, for joining $S_r$, and a and a \emph{subsidy} ${\kappa_{{r}}\delta_{{b}}^{{r}}}\geq 0$ to consumer $b$ from retailer $r$. The subsidy term is equivalent to the Banzhaf power index employed in cooperative games, which quantifies how pivotal a player is within a coalition; see \cite{Algaba2019}. A form of the subsidy term is also used in \cite{Namerikawa} as an incentive tool.

\begin{table}[b!]
  \caption{Parameters for Retailers and Consumers.\label{p4tab:param}}
  \centering
  \begin{tabular}{cccccc}
    \hline 
\textbf{Retailer} & $ \alpha_{{r}_i}$ & $\kappa_{{r}_i}$ & $ \underline\lambda_{i}$ & $\overline \lambda_i $& $P^\ts{max}_{i}$\\ \hline
$r_1$      & 1e-1       & 0.65    & 0.01     & 2   & 1 \\
$r_2$      & 7e-1       & 0.64    & 0.01     & 2   & 1 \\
$r_3$      & 5e-1       & 0.63    & 0.01     & 2   & 1 \\ \hline
&&&&\\
\end{tabular}
\begin{tabular}{ccccc}
\hline
\textbf{Consumer} & $ \alpha_{1,{b}_j}$ & $\alpha_{2,{b}_j}$& $\underline \zeta_{{b}_j}$ & $\overline \zeta_{{b}_j}$  \\ \hline
$b_1$   & 1   & 0.42 & 0  & 1  \\
$b_2$   & 1   & 0.42 & 0  & 1  \\
$b_3$   & 1   & 0.32 & 0  & 1  \\
$b_4$   & 1   & 0.46 & 0  & 1  \\
$b_5$   & 1   & 0.26 & 0  & 1  \\ \hline
\end{tabular}
\end{table}
The properties of Algorithm~\ref{alg:coal_form}  are illustrated in Figures~\ref{fig:coal_conv} and \ref{fig:vf}. For the former, Figure~\ref{fig:coal_conv} shows how, for a choice of $\mu_1$, $\mu_2$, and $\mu_3$ satisfying the conditions stated in Lemma~\ref{lem:scalar_max}, the algorithm converges in finite time despite being initialised at all the possible partitions in $\mc{M}_{\mc R,B}$. The algorithm converges towards a set of partitions for all possible initialisations. This suggests Algorithm~\ref{alg:coal_form} admits a notion of invariance, \ie there are partitions in $\mc{M}_{\mc R,B}$ that are yield the maximum value for $\eta_\mu$. For the different choices of scalarising weights, the solution has the form $\mc{R}\times \{B_1,\ldots,B_{|\mc{R}|}\}$ containing all possible combinations of retailers and subsets of buyers. For \[\mu_1 = (0.004,0.004,  0.094, 0.004,  0.094, 0.207, 0.387,  0.207),\] the partitions where the algorithm converges are \(\mc{R} \times \{\{b_2\},\{b_4\},\{ b_1,b_3,b_5\}\}\) and \(\mc{R} \times \{\{b_2,b_5\},\{b_4\},\{ b_1,b_3\}\}\). Similarly, \[ \mu_2 = 0.004,0.094,0.004,0.049,0.049,0.207,0.207,0.387),\] yields \(\mc{R} \times  \{\{b_2,b_5\},\{b_3\},\{ b_1,b_4\}\}\), \(\mc{R} \times \{\{b_1,b_5\},\{b_3\},\{ b_2,b_4\}\}\), and \(\mc{R} \times \{\{b_3,b_5\},\{b_1\},\{ b_2,b_4\}\}\). Lastly, for \[ \mu_3 = (0.049, 0.004,  0.094,  0.004, 0.049,0.387,0.207,0.207),\]Algorithm~\ref{alg:coal_form} terminates in  \(\mc{R} \times \{\{b_1,b_3,b_5\},\{b_4\},\{ b_2\}\}\). In all these cases, as shown in Figure~\ref{fig:vf}, the function $\eta_\mu(\cdot)$ increases monotonically along the partition path, as guaranteed by Theorem~\ref{thm:alg_convergence}. The distribution of convergence steps is shown in Figure~\ref{fig:kdis}; for the given weight vectors, the average number of convergence steps is between $3$ and $4$. 
\begin{figure}[t!]
  \centering
 \input{figs/iteration_partition.tikz}
  \caption{Convergence of Algorithm~\ref{alg:coal_form} starting at a partition $\mc{C}_0\in\mc{M}_{\mc R,B}$ using different scalarisation constants $\mu_1,\mu_2,\mu_3 \in [0,1]^{|\mc{R}|+|\mc{B}|}$ (\eqref{fig:eta_mu1}, \eqref{fig:eta_mu2}, and \eqref{fig:eta_mu3} respectively).}
  \label{fig:coal_conv}
\end{figure}
\begin{figure}[t!]
  \centering
\input{figs/vf.tikz}
  \caption{Monotonic evolution of the function $\eta_\mu(\cdot)$ along the iterations of Algorithm~\ref{alg:coal_form} for all possible starting partitions $\mc{C}\in\mc{M}_{\mc R,B}$. Each line \eqref{fig:eta_mu1}, \eqref{fig:eta_mu2}, and \eqref{fig:eta_mu3} corresponds to three different scalarisation vectors $\mu_1$, $\mu_2$, and $\mu_3$ respectively.}
  \label{fig:vf}
\end{figure}
\begin{figure}[t!]
  \centering
%
\definecolor{mycolor1}{rgb}{0.10980,0.48235,0.95294}%
\definecolor{mycolor2}{rgb}{0.50196,0.32549,0.70588}%
\definecolor{mycolor3}{rgb}{0.39216,0.39216,0.39216}%
\begin{tikzpicture}

\begin{axis}[%
width=7cm,
height=4.5cm,
at={(0cm,0cm)},
scale only axis,
area legend,
xmin=0.0800000000000006,
xmax=6.02,
xlabel={$K^{max}_\mu$},
xmajorgrids,
ymin=0,
ymax=0.6,
ymajorgrids,
axis x line*=bottom,
axis y line*=left,
legend style={legend cell align=left,align=left,draw=white!15!black}
]
\addplot[ybar,bar width=0.5cm,bar shift=-0.2cm,draw=black,fill=mycolor1] plot table[row sep=crcr] {%
1	0.0246913580246914\\
2	0.238683127572016\\
3	0.497942386831276\\
4	0.230452674897119\\
5	0.00823045267489712\\
};\addlegendentry{$\mu_1$}\label{fig:his_mu1}
\addplot[ybar,bar width=0.5cm,bar shift=0cm,draw=black,fill=mycolor2] plot table[row sep=crcr] {%
1	0.0493827160493827\\
2	0.366255144032922\\
3	0\\
4	0.452674897119342\\
5	0.131687242798354\\
};\addlegendentry{$\mu_2$}\label{fig:his_mu2}
\addplot[ybar,bar width=0.5cm,bar shift=0.2cm,draw=black,fill=mycolor3] plot table[row sep=crcr] {%
1	0.0246913580246914\\
2	0.222222222222222\\
3	0\\
4	0.576131687242798\\
5	0.176954732510288\\
};\addlegendentry{$\mu_3$}\label{fig:his_mu3}
\end{axis}
\end{tikzpicture}
  \caption{Histogram of the distribution of the convergence steps of Algorithm~\ref{alg:coal_form} for all possible intial partions and three scalarising vectors $\mu_1$, $\mu_2$, and $\mu_3$ (\eqref{fig:his_mu1}, \eqref{fig:his_mu2}, \eqref{fig:his_mu3} respectively).}
  \label{fig:kdis}
\end{figure}

In Figure~\ref{fig:coal_inv_heat}, we have initialised the algorithm at each $\mc{C}\in\mc{M}_{\mc R,B}$ and have used a dense sample of the set of admissible scalarising weights consisted of $1000$ points. The convergence of the algorithm shows clearly that there exists an \emph{equilibrium set} $\mf{C}\subset\mc{M}_{\mc R,B}$ that is attractive for Algorithm~\ref{alg:coal_form}. In the figure, the vertical axis corresponds to the initial partition, and the horizontal one represents the terminal one; the colour of each entry $(\mc{C}_f,\mc{C}_0)$ determines the number of times $\mc{C}_0$ has ended in $\mc{C}_f$ using different scalarisation vectors. In Figures~\ref{fig:par_coor_equilibrium} and \ref{fig:par_coor_var}, we explore the performance attained by Algorithm~\ref{alg:coal_form} on the set $\mf{C}\subset\mc{M}_{\mc R,B}$ for both objectives and variables for the scalarised optimisation problems~\eqref{p4eq:opt_mi}. Each polyline corresponds to one terminal coalition structure $\mc C(\mu)\in\mf{C}$ (for some sampled $\mu$), and the value on each vertical axis gives the resulting objective (buyer profits $\Pi_{b_1},\dots,\Pi_{b_5}$ and retailer profits $\Pi_{r_1},\Pi_{r_2},\Pi_{r_3}$). Since every $\mc C(\mu)$ is obtained as a global optimum of a weighted-sum scalarisation, the plotted outcomes are supported weak Pareto optima. Crossings between axes highlight the trade-offs among objectives along the supported Pareto set.
\begin{figure*}[t!]
  \centering
%
\input{figs/colormap_def.tex} 
\begin{tikzpicture}

\begin{axis}[%
width=16cm,
height=4.5cm,
at={(0cm,0cm)},
scale only axis,
axis on top,
xmin=0.5,
xmax=243.5,
xtick={118,120,122,154,156,158,160,162,164,196,198,200,202,204,206,208,210,212,220,222,224},
xticklabel style={rotate=90},
xlabel={Partitions in  $\mc{M}_{\mc{R},\mc{B}}$},
xmajorgrids,
ymin=0.5,
ymax=243.5,
ytick={1,11,21,31,41,51,61,71,81,91,101,111,121,131,141,151,161,171,181,191,201,211,221,231,241},
ylabel={Partitions in  $\mc{M}_{\mc{R},\mc{B}}$},
ymajorgrids,
major grid style={
  line width=0.2pt,
  draw=gray!20,                   
},
minor grid style={
  line width=0.1pt,
  draw=gray!20,                   
},
colormap name=mycolormap,
colorbar,
colorbar style={
  width=0.2cm,
  ytick={0, 0.2, 0.4, 0.6, 0.8, 1},
  yticklabels = {$0$, $0.1$, $0.2$, $0.3$, $0.4$, $0.48$},
  yticklabel style={/pgf/number format/fixed,
    /pgf/number format/precision=2},
    },
legend style={legend cell align=left,align=left,draw=white!15!black}
]
        \addplot graphics[
            xmin=0, xmax=243,
            ymin=0, ymax=243,
        ] {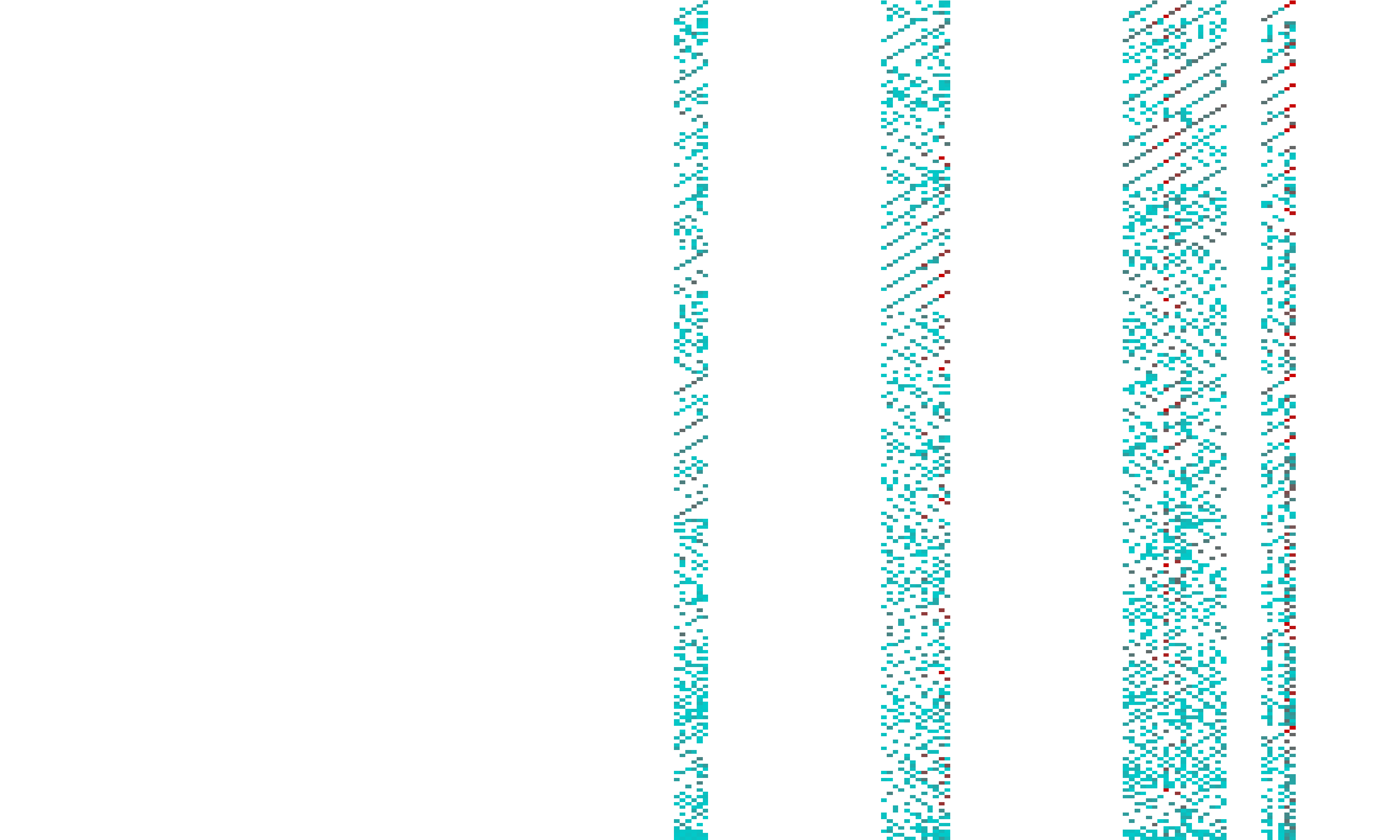};
\end{axis}
\end{tikzpicture}%

  \caption{Normalised transition count matrix induced by Algorithm~\ref{alg:coal_form} over the complete partition set $\mc{M}_{\mc R,B}$. Rows index the initial partition $\mc C_0\in\mc{M}_{\mc R,B}$ and columns index the terminal partition $\mc C^\star(\mu,\mc C_0)$ returned by the algorithm. For a finite sample $\mc U\subset[0,1]^{|\mc R|+|\mc B|}$ of scalarisation vectors satisfying Lemma~\ref{lem:scalar_max}, the entry $C(i,k)$ reports the fraction of $\mu\in\mc U$ for which the algorithm initialised at partition $i$ terminates at partition $k$ (darker colours indicate higher frequency).}
  \label{fig:coal_inv_heat}
\end{figure*}
\begin{figure}[t!]
  \centering
  \input{figs/colormap_def.tex} 
\begin{tikzpicture}

\begin{axis}[%
width=7cm,
height=4.5cm,
at={(0cm,0cm)},
scale only axis,
axis on top,
xmin=0.5,
xmax=18,
xtick={2,4,6,8,10,12,14,16},
xticklabels = {$\Pi_{b_1}$,$\Pi_{b_2}$,$\Pi_{b_3}$,$\Pi_{b_4}$,$\Pi_{b_5}$,$\Pi_{r_1}$,$\Pi_{r_2}$,$\Pi_{r_3}$},
xmajorgrids,
ymajorgrids,
ymin=-0.6,
ymax=0.6,
ytick={-0.55,0.55},
ylabel={\footnotesize Objective value},
ylabel shift=-0.7cm,
major grid style={
  line width=0.2pt,
  draw=gray!20,                   
}
]
        \addplot graphics[
            xmin=0.95, xmax=17,
            ymin=-0.65, ymax=0.6,
        ] {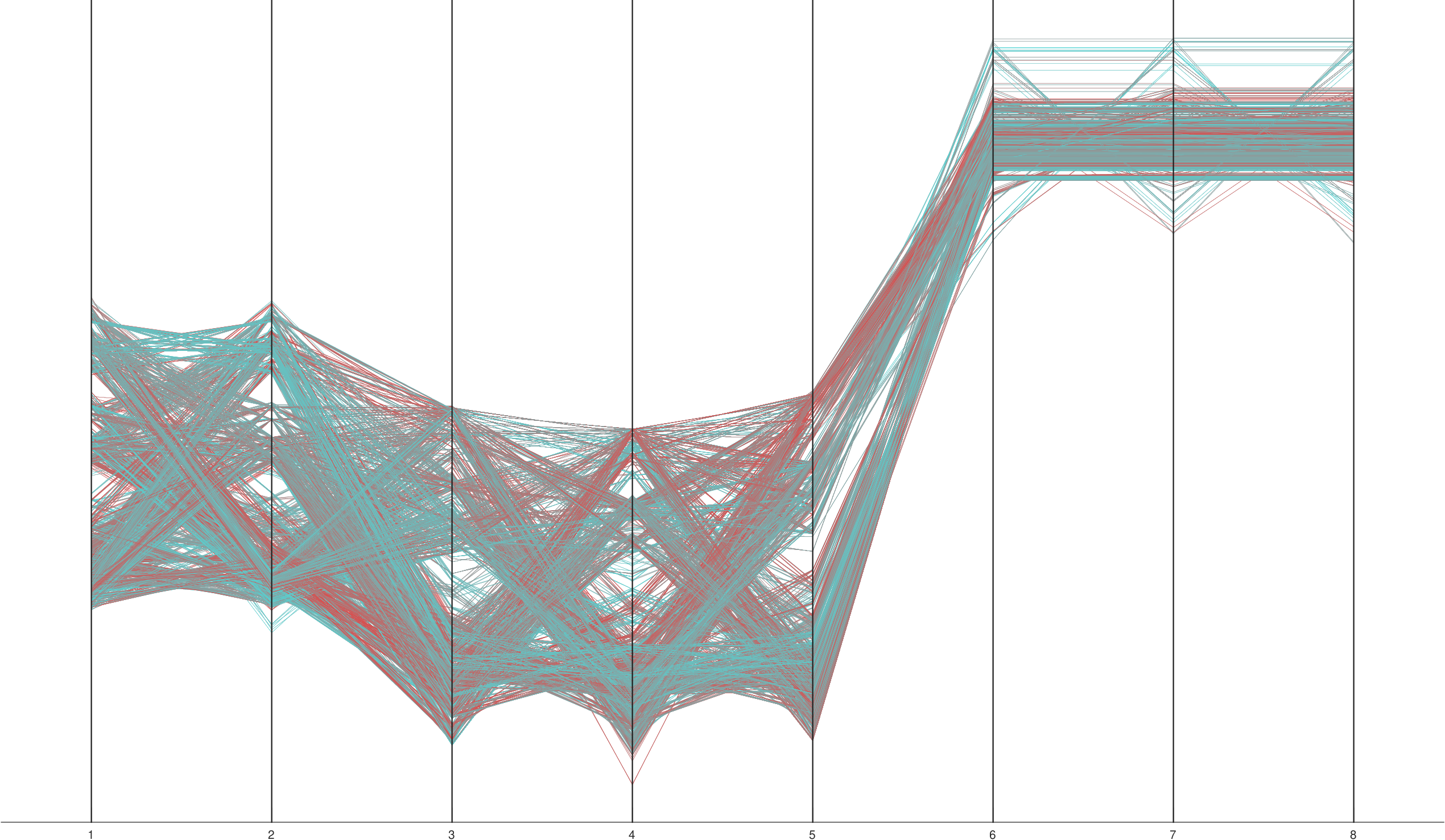};
\end{axis}
\end{tikzpicture}%

  \caption{Parallel coordinates plot of the objective vectors obtained from the equilibrium partition set $\mf{C}\subset\mc{M}_{\mc R,B}$ for~\eqref{p4eq:opt_mi}. Each polyline corresponds to a terminal solution $\mc C(\mu)\in\mf{C}$ (for some sampled $\mu$) and reports the buyer profits $\Pi_{b_i}$ and retailer profits $\Pi_{r_j}$. Since $\mc C(\mu)$ solves a weighted-sum scalarisation globally, all plotted outcomes are supported weak Pareto optima; crossings indicate trade-offs.}

  \label{fig:par_coor_equilibrium}
\end{figure}
\begin{figure}[t!]
  \centering
  \input{figs/colormap_def.tex} 
\begin{tikzpicture}

\begin{axis}[%
width=7cm,
height=4.5cm,
at={(0cm,0cm)},
scale only axis,
axis on top,
xmin=0.5,
xmax=18,
xtick={2,4,6,8,10,12,14,16},
xticklabels = {$P_{b_1}$,$P_{b_2}$,$P_{b_3}$,$P_{b_4}$,$P_{b_5}$,$\lambda_{r_1}$,$\lambda_{r_2}$,$\lambda_{r_3}$},
xmajorgrids,
ymajorgrids,
ymin=0,
ymax=2.05,
ytick={0.02,2},
ylabel={\footnotesize variable value},
ylabel shift=-0.7cm,
major grid style={
  line width=0.2pt,
  draw=gray!20,                   
}
]
        \addplot graphics[
            xmin=0.95, xmax=17,
            ymin=-0.115, ymax=2.1,
        ] {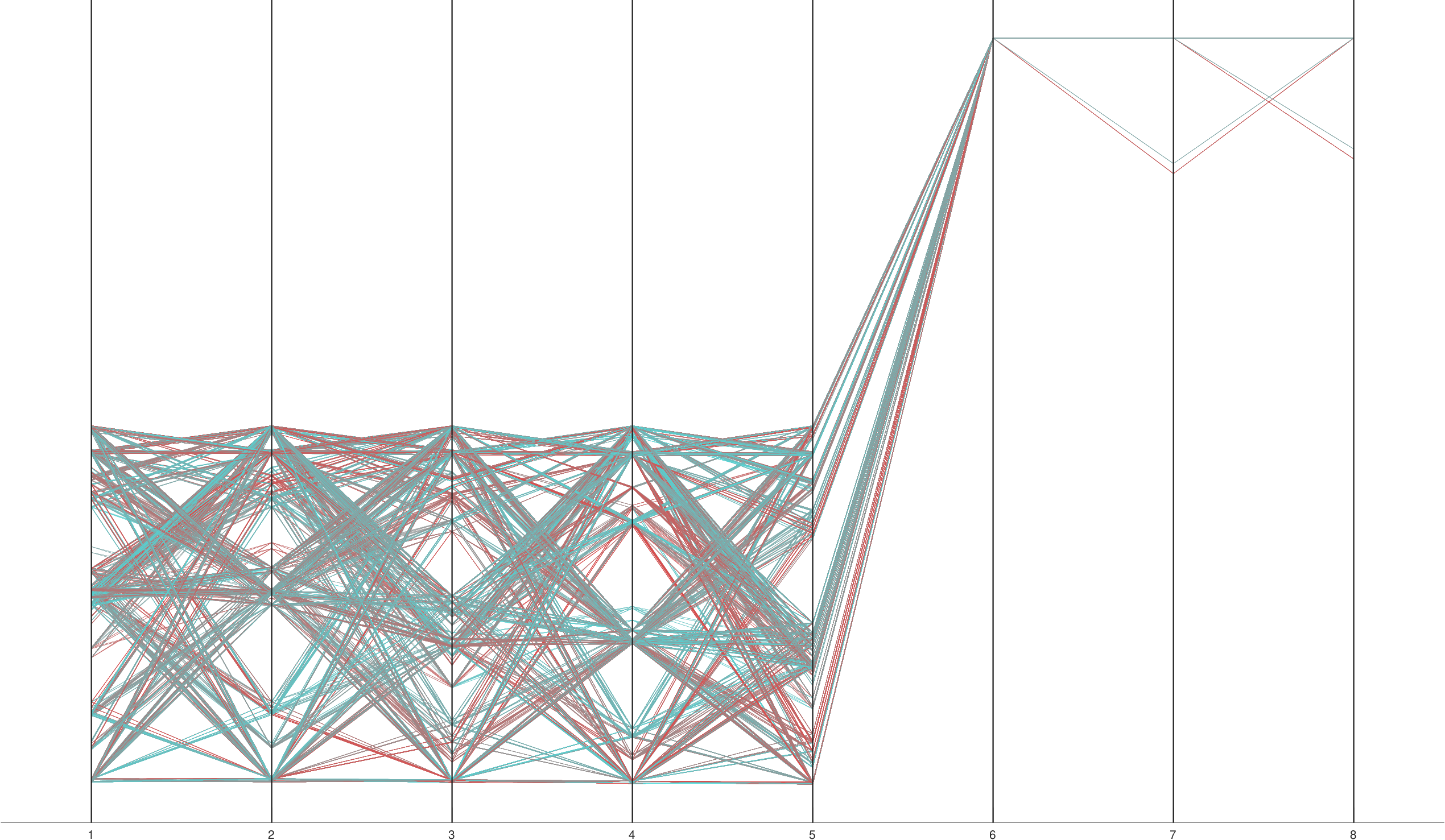};
\end{axis}
\end{tikzpicture}%
  \caption{Parallel coordinates plot of the vector of decision variables obtained from the equilibrium partion set $\mf{C}\subset\mc{M}_{\mc R,B}$. The algorithm returns prices $\lambda_r$ that are almost constant at the highest value allowed by the constraints. The trade-offs correspond to consumers' selection of consumed power, $P_b$.}
  \label{fig:par_coor_var}
\end{figure}
\section{Conclusions and future work}
\label{sec:concl-future-work}
In this paper, we have proposed a novel demand-side management scheme for a network with multiple retailers. We have modelled competition amongst retailers as a coalitional game leading to a collection of  $|\mc{N}|$ coupled mixed-integer optimisation problems in which prices, power demands, and the network partition are decision variables. Our approach consisted of proposing a coalition-forming algorithm based on multi-objective optimisation principles that aim to maximise retailers' profit and consumers' welfare. We have shown that our algorithm exhibits finite convergence and recovers a subset of weak Pareto-optimal trade-offs. Furthermore, the outcome of our proposed algorithm is stable in a game-theoretic sense.  Numerical results from an academic case study illustrated the method's behaviour and showed that the resulting equilibrium partition set contains multiple meaningful trade-offs between the competing objectives, in line with standard properties of Pareto fronts. Finally, we extended the framework to a risk-sharing setting by replacing the expected-value objective with a conditional value-at-risk criterion. Future work will relax the assumption that each node acts exclusively as either a consumer or a load and will study how contract duration influences network performance, coalition stability, and price recomputation.
\bibliography{bibliography}
\bibliographystyle{IEEEtran}
\end{document}